\documentclass[aps,prl,reprint,superscriptaddress,nofootinbib,floatfix]{revtex4-2}
\pdfoutput=1

\usepackage{amssymb}
\usepackage{amsmath}
\usepackage{graphicx}
\usepackage{tikz}
\usepackage[T1]{fontenc}
\usepackage[utf8]{inputenc}

\xdefinecolor{mylinkcolor}{rgb}{0,0,0.7}
\usepackage[
  colorlinks=true, citecolor=mylinkcolor,
  linkcolor=mylinkcolor, urlcolor=mylinkcolor,
]{hyperref}

\allowdisplaybreaks

\usepackage{amsthm}
\newtheorem{theorem}{Theorem}
\newtheorem{corollary}{Corollary}

\newcommand{\g}{\mathfrak{g}}
\newcommand{\gC}{\mathfrak{g}_\mathbb{C}}
\newcommand{\gR}{\mathfrak{g}_\mathbb{R}}
\newcommand{\rad}{\mathfrak{r}}
\newcommand{\gss}{\mathfrak{g}_{\mathrm{ss}}}
\newcommand{\ad}{\mathrm{ad}}
\newcommand{\Sym}{\mathrm{Sym}}
\newcommand{\Tr}{\mathrm{Tr}}

\begin{document}

\title{Lie Algebra Saddles in the IKKT Matrix Model\\ and\\ Criteria for the Emergence of Time}

\author{Henry Liao}
\affiliation{Department of Physics and Center for Theoretical Physics, National Taiwan University, Taipei 10617, Taiwan}
\email{henryliao.physics@gmail.com}

\begin{abstract}
In the IKKT matrix model, the dimension, geometry, and metric signature of spacetime must emerge from the matrix dynamics.
We classify the Lie-algebraic saddles of the model: classical solutions in which the ten Hermitian matrices realize a finite-dimensional real Lie algebra in a unitary representation.
For this class the equations of motion collapse to a single algebraic equation for a symmetric bilinear form $h$ on the algebra, and the signature of the emergent spacetime can be read off from $h$.
Two facts control the outcome.
First, when the complexified algebra is simple, a Casimir obstruction allows only $h=0$; this explains why fuzzy spheres and the expanding-universe solutions need mass terms or infrared regulators.
Second, semisimple algebras admit saddles only through couplings between distinct simple ideals; such saddles first appear in six dimensions and reach every signature only from twelve.
These facts yield criteria for the emergence of time.
Every nondegenerate Lorentzian saddle in at most eleven dimensions---hence every Lorentzian background of the ten-dimensional model---has a nonzero solvable radical. 
The only nontrivial possibility of the semisimple part is the Lorentz algebra $\mathfrak{so}(1,3)$, which is purely spacelike, and every timelike direction has a component in the radical.
When the semisimple part is nontrivial, it acts on the radical only from ten dimensions on; at ten dimensions, the radical must be abelian and must carry the Weyl-spinor representation, which solves the Lorentzian equations of motion; a nonabelian radical first appears in eleven dimensions.
As a result, space can be semisimple; time must be solvable.
\end{abstract}

\maketitle

\section{Introduction}\label{sec:intro}
The IKKT matrix model~\cite{Ishibashi:1996xs} is a candidate nonperturbative definition of the type IIB superstring in which spacetime is not part of the definition of the theory but emerges dynamically.
Its degrees of freedom are ten Hermitian bosonic matrices $A^\mu$ and their fermionic superpartners.
Geometry---the dimensionality, shape, and extent of the emergent space, viewed as a brane whose embedding the $A^\mu$ encode---must arise from their dynamics~\cite{Aoki:1998vn}.
The original proposal used Euclidean signature~\cite{Ishibashi:1996xs}; later works study Lorentzian versions of the model~\cite{Kim:2011cr,Kim:2011ts,Kim:2012mw,Ito2014rg,Ito2015pos,Steinacker:2017vqw,SternXu2018,NishimuraTsuchiya2019cl,Aoki2019str,SperlingSteinacker2019,Hatakeyama2020cls,Klinkhamer2020emu,Klinkhamer2020mf,Klinkhamer2020pts,Klinkhamer2020rbb,Klinkhamer2021mfe,Nishimura:2022alt,Anagnostopoulos:2022dak,Hirasawa2023reg,BattistaSteinacker2023,SteinackerTran2023lor,Tsuchiya2024reg,Anagnostopoulos2026cl,Anagnostopoulos2026susy}; Ref.~\cite{Steinacker:2026inv} reviews the Lorentzian program.
The Euclidean and Lorentzian versions of the model behave differently.
Lorentzian Monte Carlo studies find evidence for an expanding $(3{+}1)$-dimensional universe~\cite{Kim:2011cr,Anagnostopoulos:2022dak}, and the classical expanding solutions behind this behavior exist only in the infrared-regularized model~\cite{Kim:2011ts,Kim:2012mw}.
In the Euclidean model, no nontrivial noncommutative background survives at finite matrix size~\cite{Steinacker:2017vqw}.
Simulations, classical solutions, and deformations have probed both versions intensively~\cite{Aoki:1998vn,Kim:2011cr,Kim:2011ts,Kim:2012mw,Steinacker:2017vqw,Nishimura:2022alt,Anagnostopoulos:2022dak,Tsuchiya2024reg}.
What is still missing is a structural statement of which signatures the classical solutions can realize and where, inside a configuration, the timelike direction can reside.

This work analyzes the class of \emph{Lie-algebraic} backgrounds: configurations whose matrices realize a finite-dimensional real Lie algebra $\gR$ in a unitary---generically infinite-dimensional---representation.
A Lie algebra is a vector space equipped with an antisymmetric bracket $[\cdot,\cdot]$ that satisfies the Jacobi identity; for matrices, the bracket is the commutator.
The ansatz therefore asks that the commutator of any two of the matrices $A^\mu$ be a linear combination of the matrices themselves.
Every finite-dimensional Lie algebra decomposes, by the Levi decomposition $\g\cong\gss\ltimes\rad$, into a semisimple part $\gss$ acting on the radical $\rad$, the maximal solvable ideal.
A semisimple algebra is a direct sum of simple ideals; the prototype is angular momentum, $[J_i,J_j]=i\epsilon_{ijk}J_k$, which generates $\mathfrak{su}(2)$, the algebra of the fuzzy sphere~\cite{Madore:1991bw}.
A solvable algebra is one whose derived series $\g\supset[\g,\g]\supset[[\g,\g],[\g,\g]]\supset\cdots$ terminates at zero; the prototype is the canonical pair, $[\hat x,\hat p]=i$, which generates the Heisenberg algebra, the algebra of the Moyal plane~\cite{Douglas:2001ba}.
Loosely speaking, semisimple algebras generate fuzzy spheres and hyperboloids, and solvable algebras generate flat, Moyal-type geometries.
We address two questions.
\begin{itemize}
\item[(a)] For the model with Euclidean or with Lorentzian target signature, which solutions does the Lie-algebraic ansatz admit?
\item[(b)] For the Lorentzian version, where does the timelike direction reside among the generators of a solution?
\end{itemize}

The Lie-algebraic class contains the standard algebraic backgrounds: Moyal planes (once the identity produced by the commutators is counted as a central generator of the Heisenberg algebra), fuzzy geometries, the expanding-universe solutions of Refs.~\cite{Kim:2011ts,Kim:2012mw}, and the compactified brane solutions with split noncommutativity of Ref.~\cite{Steinacker:2011wb}.
It excludes Snyder-type embeddings~\cite{Snyder:1946qz,SperlingSteinacker2019}, in which the matrices sit inside a larger algebra without closing among themselves; those satisfy weaker equations, and Sec.~\ref{sec:disc} discusses them.
It is likewise complementary to the covariant formulations, in which the matrices realize infinite-dimensional operator algebras: covariant derivatives on curved manifolds~\cite{HanadaKawaiKimura2006,Tsuchiya2024reg}, or bilocal fields reproducing general relativity at low energies~\cite{HoKawaiSteinacker2026}.
The results below also account for the algebraic form of those constructions.
Within the class itself, Chatzistavrakidis~\cite{Chatzistavrakidis:2011su} scanned the classification tables of all real Lie algebras of dimension $\le5$ and of all nilpotent ones of dimension $6$, placing the generators directly along coordinate axes, so that the metric the algebra inherits is the flat one in the tabulated basis.
Every solution found there is nilpotent or solvable.
Our results explain this outcome and extend the search: we keep the algebra fixed but let its embedding into the target space vary, so that the inherited metric becomes an unknown to be solved for, and we prove that no algebra of dimension $\le5$ with a nonzero semisimple part can host a background of the Euclidean or of the Lorentzian model.
The first semisimple solution appears at exactly six dimensions---the $\mathfrak{so}(1,3)$ solution of Ref.~\cite{Liao:2025yfb}, beyond that survey.

On this class the equations of motion reduce to a single algebraic equation, Eq.~\eqref{eq:master}, for a real symmetric bilinear form $h$ on the algebra.
The form $h$ is the target-space metric $\eta$ pulled back to the algebra, and $\eta$ enters only through $h$.
We therefore do not fix the signature of $\eta$ in advance; we solve Eq.~\eqref{eq:master} for $h$ and read the signature off the solutions.
Throughout, we take positive directions of $h$ to be timelike; Lorentzian means signature $(1,d-1)$, with $d$ the dimension of the algebra.

Our results are the following.

(1) On a real Lie algebra whose complexification is simple, the only solution is $h=0$ (Theorem~\ref{thm:nogo}).
The obstruction comes from the quadratic Casimir.
It explains why fuzzy-sphere backgrounds exist only in mass- or Myers-deformed models~\cite{Myers:1999ps,Iso:2001mg,Steinacker:2010rh,GoharaSako2025}, and why the expanding solutions of Refs.~\cite{Kim:2011ts,Kim:2012mw} require infrared regulators, whose multipliers are fixed by the adjoint Casimir.

(2) On a semisimple algebra every solution vanishes on each simple ideal and is carried entirely by the couplings between distinct ideals, that is, by the off-block-diagonal components of $h$ (Corollary~\ref{cor:semisimple}).
These statements concern the complexified equation.
Real Lie algebras are its real forms, selected by a conjugation, and the reality condition is the compatibility of $h$ with that conjugation (Sec.~\ref{sec:real}).
Once we impose it, the attainable signatures fall into three cases (Theorem~\ref{thm:signature}), according to how the conjugation acts on the simple ideals: an ideal mapped to itself forces at least three positive and three negative directions, a single pair of ideals swapped by the conjugation contributes an even number of positive directions, and two or more such pairs---possible only for $d\ge12$---attain every signature.

(3) Consequently, every nondegenerate Lorentzian solution with $d\le11$ has a nonzero radical (Theorem~\ref{thm:main}).
Euclidean solutions, by contrast, exist already on the semisimple sector, starting from the six-dimensional solution above.
Moreover, the semisimple part is zero or a single copy of $\mathfrak{so}(1,3)$, and $h$ is negative definite on it: every nonzero vector of the semisimple part is spacelike, and every timelike or null vector has a nonzero component in the radical (Corollary~\ref{cor:radical}).

(4) Representation theory decides whether the semisimple part can act nontrivially on the radical, $[\gss,\rad]\neq0$ (Theorem~\ref{thm:semi}).
For $d\le9$ it cannot, and every solution lives on a direct sum $\gss\oplus\rad$; the first solutions on a genuine semidirect sum appear at $d=10$, where the radical must be abelian and must transform as a Weyl spinor, and a nonabelian radical first appears at $d=11$.
Together with the low-dimensional classification of Ref.~\cite{Chatzistavrakidis:2011su}, these results organize the known solutions, which Appendix~\ref{app:F} collects.

In the ten-dimensional model of either signature, the semisimple part of any Lie-algebraic background---nondegenerate or not---is $0$ or $\mathfrak{so}(1,3)$, with $h$ negative definite on it (Sec.~\ref{sec:lorentz}, Appendix~\ref{app:C}); the Euclidean saddle of Ref.~\cite{Liao:2025yfb} is therefore the only semisimple building block available in ten dimensions.
All statements are classical: we solve the classical equation of motion exactly, without regularization, and we do not address the quantum dynamics.
How to extract an emergent geometry from a given solution remains a separate, open problem, which we discuss briefly in Sec.~\ref{sec:disc}.
In summary, in every Lie-algebraic background of the ten-dimensional model, the semisimple sector cannot carry a timelike direction; time lies in the solvable radical.

The paper is organized as follows.
Section~\ref{sec:setup} defines the ansatz and reduces the equation of motion to a finite condition, Eq.~\eqref{eq:master}.
Section~\ref{sec:class} classifies the solutions: over the complex numbers, then for the real forms and their signatures, then for Lorentzian signature.
Section~\ref{sec:radical} constructs explicit solutions, on direct and on semidirect sums.
Section~\ref{sec:disc} relates the results to other approaches and lists the open problems.
The appendices contain the proofs (A, B, and G), the pullback analysis (C), the explicit solutions (D), the infrared regulators (E), and the solution landscape (F).

\section{Setup and strategy}\label{sec:setup}
The degrees of freedom of the IKKT model are ten Hermitian bosonic matrices $A^\mu$ and a Majorana--Weyl spinor $\Psi$ of $\mathrm{SO}(1,9)$~\cite{Ishibashi:1996xs}.
The fermions enter the bosonic equations of motion only bilinearly, so we can consistently set $\Psi=0$, and we do so throughout.
The action of the bosonic sector is
\begin{equation}\label{eq:SYM}
  S=-\tfrac14\,\Tr\!\big(\eta_{\mu\rho}\eta_{\nu\sigma}[A^\mu,A^\nu][A^\rho,A^\sigma]\big),
\end{equation}
with $\mu,\nu,\rho,\sigma=1,\dots,k$.
We do not fix the dimension $k$ or the signature of the real symmetric metric $\eta$ in advance; for the type IIB superstring, $k=10$, and $\eta$ is Euclidean or Lorentzian.
The classification below therefore applies to every Yang--Mills-type matrix model, for example the BFSS matrix model~\cite{Banks:1996vh}, and Sec.~\ref{sec:disc} spells this out.
The equations of motion for $A^\mu$ are
\begin{equation}\label{eq:EOM}
  \eta_{\nu\rho}\,[A^\nu,[A^\rho,A^\mu]]=0 .
\end{equation}

We now state the ansatz.
A configuration is \emph{Lie-algebraic} if it is built on a finite-dimensional real Lie algebra $\gR$.
With $\{t^a\}$, $a=1,\dots,d$, a basis of $\gR$ and $C$ a real $k\times d$ matrix of rank $d\le k$, the ansatz is
\begin{equation}\label{eq:ansatz}
  A^\mu=C^\mu{}_a\,t^a ,
\end{equation}
with the commutators in Eq.~\eqref{eq:EOM} read as Lie brackets.
In words, the linear span of the matrices $A^\mu$ closes under the commutator, and $C$ records which combination of generators occupies which target-space direction.
We recover matrices as $A^\mu=i\,C^\mu{}_a\,\pi(t^a)$, with $\pi$ a faithful unitary representation of $\gR$; such a representation exists for every real Lie algebra~\cite{Knapp2002}.
The generators of a unitary representation satisfy $\pi(x)^\dagger=-\pi(x)$, so the matrices $A^\mu$ are Hermitian.
For noncompact real forms and for nonabelian nilpotent or solvable algebras, such representations are necessarily infinite-dimensional.
Since $\pi$ preserves brackets and is faithful, and the powers of $i$ give only an overall constant in Eq.~\eqref{eq:EOM}, working with the Lie bracket loses no generality.

Substituting Eq.~\eqref{eq:ansatz} into Eq.~\eqref{eq:EOM} gives
\begin{equation}\label{eq:EOMansatz}
  {C^\mu{}_c\;h_{ab}\,[t^a,[t^b,t^c]]=0\quad\forall\,\mu,\qquad h=C^{T}\eta\,C .}
\end{equation}
Since $C$ has rank $d$, it is injective on the index $c$, so Eq.~\eqref{eq:EOM} is equivalent to
\begin{equation}\label{eq:master}
  h_{ab}\,[t^a,[t^b,t^c]]=0\quad\forall\,c,
  \qquad h=C^{T}\eta\,C .
\end{equation}

In the Euclidean model at finite $N$, contracting Eq.~\eqref{eq:EOM} with $A_\mu$ and taking the trace forces all commutators of Hermitian matrices to vanish, so no noncommutative background survives~\cite{Steinacker:2017vqw}.
This argument uses the trace and does not apply to infinite-dimensional representations.
Equation~\eqref{eq:EOM} itself is an identity between operators and contains no trace, so it is well defined without regularization.
The action~\eqref{eq:SYM}, on the other hand, does require a regularized trace; we do not deal with that here.
The solutions below are exact solutions of Eq.~\eqref{eq:EOM}, with the same status as the Moyal plane and the expanding-universe backgrounds.

The metric $\eta$ enters Eq.~\eqref{eq:master} only through $h$, which is $\eta$ pulled back to the algebra: for $v\in\gR$, $h(v,v)=\eta(Cv,Cv)$ is the squared length of the target-space direction that $v$ occupies.
We therefore do not fix the signature of $\eta$ in advance; we solve Eq.~\eqref{eq:master} for $h$ and read the signature off.
Throughout, $p$ and $q$ count the positive and negative eigenvalues of $h$, positive directions are timelike, and Lorentzian means $(p,q)=(1,d-1)$; a Euclidean target has all directions spacelike, $(p,q)=(0,d)$.
Since Eq.~\eqref{eq:master} is linear in $h$, the overall sign of $\eta$ is immaterial: $h$ and $-h$ solve together.
For $d=k$ the correspondence is exact: $h$ is nondegenerate with the signature of $\eta$, and, by Sylvester's law of inertia, every nondegenerate solution of signature $(p,q)$ arises from some invertible $C$ with $\eta$ of that signature.
For $d<k$, $h$ is the pullback of $\eta$ along the injective $C$; it can be degenerate, but $p(h)\le p(\eta)$ and $q(h)\le q(\eta)$, and its null directions are bounded by the Witt index of $\eta$ (Appendix~\ref{app:C}).

\textit{Strategy.}---For each fixed algebra, Eq.~\eqref{eq:master} is a finite system of linear equations for $h$.
Guided by the Levi decomposition, we solve the semisimple part first and treat the radical afterwards.
Section~\ref{sec:complex} solves Eq.~\eqref{eq:master} over $\mathbb{C}$, where the classification of complex simple Lie algebras is available~\cite{Knapp2002} and where we prove the no-go theorem, Theorem~\ref{thm:nogo} (Appendix~\ref{app:A}).
The outcome is that only the off-block-diagonal parts of $h$, connecting distinct simple ideals, survive.
Section~\ref{sec:real} returns to the real algebras.
A real Lie algebra is its complexification with a reality condition; since the coefficients of Eq.~\eqref{eq:master} are real, the real solutions are exactly the complex solutions compatible with that condition (Appendix~\ref{app:B}).
This step determines the attainable signatures of the semisimple part.
Section~\ref{sec:lorentz} combines the semisimple part with the radical: a Lorentzian solution with $d\le11$ requires a nonzero radical.
Section~\ref{sec:radical} constructs explicit Lorentzian solutions, with the timelike direction in the radical, and determines when the semisimple part can act on the radical.

\section{Classification of the solutions}\label{sec:class}

\subsection{The complex problem: simple and semisimple algebras}\label{sec:complex}
Following the strategy, this subsection works over $\mathbb{C}$.
\begin{theorem}[Simple no-go]\label{thm:nogo}
Let $\gC$ be a complex simple Lie algebra.
Then the only solution of Eq.~\eqref{eq:master} is $h=0$.
\end{theorem}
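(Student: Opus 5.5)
The plan is to recast Eq.~\eqref{eq:master} as the vanishing of an equivariant linear map, and then use representation theory of $\gC$ to show that this map is injective on symmetric tensors. Identify a symmetric form $h_{ab}$ with the symmetric tensor $\hat h=h_{ab}\,t^a\otimes t^b\in\Sym^2\gC$, with indices raised by the Killing form $\kappa$. Define
\begin{equation}
  \Phi:\Sym^2\gC\to\mathrm{End}(\gC),\qquad \Phi(x\cdot y)=\tfrac12\big(\ad x\,\ad y+\ad y\,\ad x\big).
\end{equation}
Then Eq.~\eqref{eq:master} says exactly that $\Phi(\hat h)=0$, so the theorem is the statement $\ker\Phi=0$.

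\emph{Step 1: $\ker\Phi$ is a submodule.} Because $\ad$ is a homomorphism, $\ad[z,x]=[\ad z,\ad x]$, and hence $\Phi(z\cdot\hat h)=[\ad z,\Phi(\hat h)]$. So $\Phi$ is $\gC$-equivariant, with $\gC$ acting on $\mathrm{End}(\gC)$ by commutators. Consequently $K=\ker\Phi$ is a $\gC$-submodule of $\Sym^2\gC$: if one $h$ solves Eq.~\eqref{eq:master}, its entire orbit under the adjoint action does too.

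\emph{Step 2: the Casimir (trivial) component.} The $\gC$-invariant part of $\Sym^2\gC$ is one-dimensional, because $\gC$ is simple and so $\kappa$ is the unique invariant symmetric form up to scale. It is spanned by the inverse Killing form $\kappa^{ab}$. The image $\Phi(\kappa^{-1})$ is the quadratic Casimir in the adjoint representation, which equals $1$ in the Killing normalization ($\Tr\,\Phi(\kappa^{-1})=\dim\gC$). It is therefore nonzero, so $K$ contains no invariant vector. This is the "Casimir obstruction" in its simplest form.

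\emph{Step 3: the remaining components.} Decompose $\Sym^2\gC=\bigoplus_i V_i$ into irreducibles. By Schur's lemma, $\Phi$ restricted to each $V_i$ is either injective or zero. I would first show that $\Phi(V_i)\neq0$ for every $i$, by taking a highest-weight vector $v_i\in V_i$ with respect to a Cartan subalgebra and an explicit root basis, and exhibiting a vector $c$ with $\Phi(v_i)c\neq0$. A model case is the top component of highest weight $2\theta$, with $v=e_\theta^2$: here $\ad(e_\theta)^2 f_\theta=-2e_\theta\neq0$. For a general $v_i=\sum_{\alpha+\beta=\mu}c_{\alpha\beta}\,x_\alpha x_\beta$, I would act on a lowest root vector $f_\theta$ or on a Cartan element, chosen so that a single term survives.

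\emph{Main obstacle.} The hard step is Step 3, specifically when $\Sym^2\gC$ has irreducible summands with multiplicity. In that case Schur's lemma only controls $\Phi$ on each isotypic block as a linear map between multiplicity spaces, and one must show this map is injective, not merely nonzero. If the explicit highest-weight computation becomes unwieldy, the fallback is a uniform trace argument. Namely, show that the pairing
\begin{equation}
  (\hat h,\hat g)\mapsto\Tr\big(\Phi(\hat h)\,\Phi(\hat g)\big)
\end{equation}
is nondegenerate on $\Sym^2\gC$. This would follow from expressing it in terms of $\kappa$ and the structure constants and checking that it is positive definite on the compact real form, where $\ad$ acts by real antisymmetric matrices and each $\Phi(\hat h)$ is a symmetric operator. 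Nondegeneracy of this pairing immediately gives $K=0$, and by complexification the same holds for $\gC$.
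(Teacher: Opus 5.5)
Your Steps 1 and 2 are correct. So is the computation $\ad(e_\theta)^2f_\theta=-2e_\theta$ on the top constituent $V_{2\theta}$, and since $\Sym^2\mathfrak{sl}_2=V_0\oplus V_{2\theta}$, this already settles $\mathfrak{sl}_2$. For higher rank, however, the proof stops exactly where the content of the theorem lies.

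$\Sym^2\gC$ has further constituents: for $\mathfrak{sl}_n$ ($n\ge4$) the copy of the adjoint built from the $d$-symbols and $V_{\omega_2+\omega_{n-2}}$, for $\mathfrak{e}_8$ the $3875$, and so on. For these you only announce that a suitable test vector will make ``a single term survive.'' Their highest-weight vectors are genuine combinations of several monomials $x_\alpha x_\beta$. There is no uniform formula for them across the classification, and nothing in your argument rules out cancellation among the terms.

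The fallback is circular. On the compact real form $\Phi(\hat h)$ is real symmetric, so $\Tr(\Phi(\hat h)^2)=\|\Phi(\hat h)\|^2\ge0$ holds automatically, with equality iff $\Phi(\hat h)=0$. Strict positivity for $\hat h\neq0$ is therefore the theorem itself. Writing the pairing through $\kappa$ and the structure constants only produces the quartic trace $\Tr(\ad_a\ad_b\ad_c\ad_d)$, whose nondegeneracy on $\Sym^2\gC$ is the same question.

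The missing idea is that $\Phi$ is not just equivariant but an explicit function of the Casimir. In a $\kappa$-orthonormal basis the structure constants are totally antisymmetric. Regard $H=(h_{ab})$ as an element of $\mathrm{End}(\gC)\cong\gC\otimes\gC$; then $\Phi(\hat h)=\sum_{a,b}h_{ab}\ad_a\ad_b=\sum_c\ad_c H\ad_c$. Expanding $\Omega(H)=\sum_c[\ad_c,[\ad_c,H]]$ with $\sum_c\ad_c^2=c_\g\,\mathbb{I}$ gives $\Phi(\hat h)=c_\g H-\tfrac12\Omega(H)$.

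Consequences of this identity:
\begin{itemize}
\item $\Phi$ maps symmetric tensors to symmetric tensors and acts on each constituent of Casimir $C_\lambda$ by the scalar $c_\g-\tfrac12C_\lambda$.
\item Your multiplicity worry disappears, because the map is diagonal on Casimir eigenspaces whatever the multiplicities.
\item Your Step 2 and the $V_{2\theta}$ check are just the cases $C_\lambda=0$ and $C_\lambda=2c_\g+2\langle\theta,\theta\rangle$.
\end{itemize}

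What remains is to show that $2c_\g$ is never a Casimir eigenvalue on $\Sym^2\gC$, and this needs genuine input. The paper uses the universal decomposition $\Sym^2\g=\mathbb{C}\oplus Y_2(\alpha)\oplus Y_2(\beta)\oplus Y_2(\gamma)$, with eigenvalues $4t-2\alpha,\,4t-2\beta,\,4t-2\gamma$ compared against $2c_\g=4t$, and checks this across the classification in Table~\ref{tab:casimir}. No soft argument from equivariance can replace that check. For rank at least two, $\Lambda^2\gC$ does contain constituents with eigenvalue exactly $2c_\g$, so the symmetry of $h$ must enter through the actual eigenvalues.
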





Here we explain the mechanism in the language of coupled angular momenta; Appendix~\ref{app:A} gives the complete proof.
Write $K_H\equiv h_{ab}\,\ad_{t^a}\ad_{t^b}$, so that Eq.~\eqref{eq:master} states that $K_H=0$ as an operator on $\g$.

The case $\g=\mathfrak{su}(2)$, with $[J_a,J_b]=\epsilon_{abc}J_c$, can be done by hand: two lines of index algebra give $h_{ab}[J_a,[J_b,J_c]]=(h-\mathrm{tr}(h)\,\mathbb{I})_{ce}J_e$, so $K_H=0$ requires $h=\mathrm{tr}(h)\,\mathbb{I}$, whose trace gives $\mathrm{tr}\,h=3\,\mathrm{tr}\,h$ and hence $h=0$.
The trace part of $h$ and its traceless part are each multiplied by a nonzero number, and they cannot cancel.

The general case has the same structure.
Regard the symmetric tensor $h_{ab}$ as the wave function of two particles that both carry the adjoint representation, with generators $\vec J_1$ and $\vec J_2$.
In a basis orthonormal for the Killing form, the double commutator is the scalar product of the two spins acting on this wave function, $K_H=-\vec J_1\!\cdot\!\vec J_2\,h$ (Eq.~\eqref{eq:app-JJ}), so Eq.~\eqref{eq:master} asks for a two-spin state with vanishing spin--spin correlation.
Since $2\vec J_1\!\cdot\!\vec J_2=\vec J_{\rm tot}^{\,2}-\vec J_1^{\,2}-\vec J_2^{\,2}$ and the total spin is quantized, on each irreducible component of $h$ with total Casimir $C_\lambda$, the correlation is the fixed number $\tfrac12(C_\lambda-2c_\g)$, where $c_\g=\vec J_1^{\,2}=\vec J_2^{\,2}$ is the adjoint Casimir, and it vanishes only if $C_\lambda=2c_\g$ exactly.
For $\mathfrak{su}(2)$ the symmetric part of $1\otimes1$ is $0\oplus2$; the correlation $\tfrac12[j(j+1)-4]$ equals $-2$ on the singlet---this is the choice $h\propto\kappa^{-1}$, the inverse Killing form, for which $K_H$ is the adjoint Casimir itself---and $+1$ on the quintet, and zero would require $j(j+1)=4$.
Table~\ref{tab:casimir} lists the ratios $C_\lambda/c_\g$ on $\Sym^2(\g)$ across the classification of complex simple Lie algebras.
The value $2$ never occurs, and this proves Theorem~\ref{thm:nogo}.

By the strategy of Sec.~\ref{sec:setup}, the same conclusion holds for every real Lie algebra whose complexification is simple, such as $\mathfrak{su}(2)$, $\mathfrak{sl}(2,\mathbb{R})$, and $\mathfrak{su}(1,1)$.
A real simple Lie algebra can, however, have a non-simple---in particular semisimple---complexification, such as $\mathfrak{so}(1,3)_{\mathbb{C}}\cong\mathfrak{so}(3)_{\mathbb{C}}\oplus\mathfrak{so}(3)_{\mathbb{C}}$; such an algebra falls under the semisimple case, Corollary~\ref{cor:semisimple} below, and its solutions reside in the coupling between the two ideals.
Indeed, $\mathfrak{so}(1,3)$ does solve Eq.~\eqref{eq:master}, with $h_{ab}\propto\delta_{ab}$ in the basis of rotations $J_i$ and boosts $K_i$~\cite{Liao:2025yfb}.

The no-go statement accounts for three known results.
The fuzzy sphere~\cite{Madore:1991bw}, built on $\mathfrak{su}(2)$, is not a solution of the undeformed model; it exists in the presence of a Myers term or a mass term~\cite{Myers:1999ps,Iso:2001mg,Steinacker:2010rh}.
In the mass-deformed model, solutions built on semisimple Lie algebras exist in general~\cite{GoharaSako2025}; the mass term supplies the nonzero eigenvalue that the undeformed equation forbids.
The expanding-universe solutions of Refs.~\cite{Kim:2011ts,Kim:2012mw} exist only in the infrared-regularized Lorentzian model.
On their $\mathfrak{su}(2)$ and $\mathfrak{su}(1,1)$ branches, $h$ is proportional to the inverse Killing form, so the left-hand side of Eq.~\eqref{eq:EOM} equals the adjoint Casimir, and the Lagrange multipliers that implement the cutoffs are fixed to this nonzero eigenvalue (Appendix~\ref{app:E}).
Their expanding branch, commuting space acted on by time through $[A^0,A^i]\propto A^i$, needs the spatial multiplier for a different reason: in the undeformed equation that algebra forces the time generator to be null, $h_{00}=h_{0i}=0$ (Sec.~\ref{sec:radical}).

Applying Theorem~\ref{thm:nogo} to each simple ideal of a semisimple algebra gives the following corollary.
\begin{corollary}[Semisimple blocks]\label{cor:semisimple}
Let $\gC$ be semisimple, that is, $\gC=\bigoplus_i\g_i$ with simple ideals $\g_i$, and write $h$ in blocks $h^{(ij)}$.
Then $h$ solves Eq.~\eqref{eq:master} if and only if every diagonal block vanishes, $h^{(ii)}=0$; the off-diagonal blocks are arbitrary.
\end{corollary}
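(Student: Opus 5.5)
Choose a basis of $\gC$ adapted to the decomposition $\gC=\bigoplus_i\g_i$, so that each basis vector $t^a$ lies in exactly one ideal. For $a$ in $\g_i$ we write $a\in i$. The proof then turns on one piece of index bookkeeping for the double commutator in Eq.~\eqref{eq:master}. Distinct simple ideals commute, $[\g_i,\g_j]=0$ for $i\neq j$, and each ideal is closed under the bracket. Hence for $a\in i$, $b\in j$, $c\in k$ the term $[t^a,[t^b,t^c]]$ vanishes unless $j=k$. When $j=k$ the inner bracket lies in $\g_k$, and the outer bracket then vanishes unless $i=k$ as well. So only terms with $a,b,c$ all in the same ideal survive. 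Fixing $c\in\g_k$, Eq.~\eqref{eq:master} therefore reduces to
\begin{equation*}
\sum_{a,b\in k} h^{(kk)}_{ab}\,[t^a,[t^b,t^c]]=0\qquad\forall\,c\in\g_k .
\end{equation*}
The off-diagonal blocks $h^{(ij)}$ with $i\neq j$ drop out of the equation entirely.

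The system thus splits into independent equations, one for each $k$. Each is exactly Eq.~\eqref{eq:master} written for the simple Lie algebra $\g_k$ with the symmetric form $h^{(kk)}$. Symmetry of $h$ passes to each diagonal block, so $h^{(kk)}$ is a legitimate unknown for that equation. Theorem~\ref{thm:nogo}, applied to $\g_k$, then forces $h^{(kk)}=0$. This proves the ``only if'' direction.

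For the converse, take any $h$ with all diagonal blocks zero. The reduction above shows that every surviving term involves only diagonal-block entries, so the left side of Eq.~\eqref{eq:master} vanishes identically, whatever the off-diagonal blocks are. This gives the ``if'' direction, and with it the statement that the off-diagonal blocks are arbitrary.

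There is no real obstacle here, since the substantive input is Theorem~\ref{thm:nogo}. The one point to check is that Theorem~\ref{thm:nogo} is basis-independent and so applies to the restricted basis of $\g_k$. It does, because $K_H=h_{ab}\,\ad_{t^a}\ad_{t^b}$ is the image of $h$, viewed as an element of $\Sym^2(\g_k)$, under a canonical map. The argument also needs the Casimir obstruction applied ideal by ideal, and the reduction guarantees this: the operator $K_H$ restricted to $\g_k$ is built from $h^{(kk)}$ alone.
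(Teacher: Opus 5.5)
Your proof is correct and is essentially the paper's argument. Since distinct ideals commute, only $a,b,c$ lying in a single ideal contribute. Theorem~\ref{thm:nogo} then kills each diagonal block, and the converse is immediate. Your explicit $i=j=k$ bookkeeping and the basis-independence remark only spell out what the paper's one-line proof leaves implicit.
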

\emph{Proof.}
Distinct ideals commute, so for $c\in\g_i$ only $a,b\in\g_i$ contribute to Eq.~\eqref{eq:master}, and Theorem~\ref{thm:nogo} forces $h^{(ii)}=0$; conversely, with all diagonal blocks zero every term vanishes. \hfill$\square$

So each simple ideal is a null block of $h$, while the couplings between different ideals are completely free.
This block structure determines which signatures survive once we impose the reality condition, as we show next.

\subsection{Real signatures for the semisimple subalgebra}\label{sec:real}
We now return to the real algebras.
Hermiticity of the matrices requires a unitary representation of a Lie algebra over $\mathbb{R}$ (Sec.~\ref{sec:setup}), so the physical solutions are the real ones, and we obtain them by imposing a reality condition on the complex solutions.

Such a condition can be formulated with a map $\sigma$ on $\gC$ that is antilinear, $\sigma(\lambda x)=\bar{\lambda}\,\sigma(x)$; involutive, $\sigma^{2}=\mathrm{id}$; and an automorphism, $\sigma([x,y])=[\sigma(x),\sigma(y)]$.
The fixed points of $\sigma$ form a real Lie algebra, and every real form $\gR$ of $\gC$ arises this way~\cite{Knapp2002}.
This mimics Hermiticity at the level of the Lie algebra: the map $X\mapsto X^{\dagger}$ on matrices is antilinear and involutive, and its fixed points are the Hermitian matrices.
For example, on $\mathfrak{sl}(2,\mathbb{C})$, the choice $\sigma(x)=-x^{\dagger}$ fixes $\mathfrak{su}(2)$, and $\sigma(x)=\bar{x}$ fixes $\mathfrak{sl}(2,\mathbb{R})$.

A complex solution $h$ of Eq.~\eqref{eq:master} defines a real symmetric form on $\gR$ if and only if it is $\sigma$-compatible; conversely, every real solution arises this way (Appendix~\ref{app:B}).
The condition $\sigma^{2}=\mathrm{id}$ constrains how $\sigma$ acts on the ideals.
A semisimple $\gC$ is the direct sum of its simple ideals, $\gC=\bigoplus_i\g_i$; an automorphism maps simple ideals to simple ideals, so $\sigma$ acts on the set $\{\g_i\}$ as a permutation.
Because $\sigma^{2}=\mathrm{id}$, this permutation is an involution, and an involution has only two kinds of orbits: fixed points and two-element orbits.
Accordingly, $\sigma$ either fixes an ideal, $\sigma(\g_i)=\g_i$, or swaps two of them, $\sigma(\g_\alpha)=\g_\beta$ and $\sigma(\g_\beta)=\g_\alpha$; we write $m$ for the number of fixed ideals and $n$ for the number of swapped pairs.
The real points of a fixed ideal form a real form of that simple ideal: $\mathfrak{su}(2)$, $\mathfrak{sl}(2,\mathbb{R})$, $\mathfrak{su}(1,1)$, etc.
The real points of a swapped pair form a complex simple algebra regarded as real, of twice the complex dimension; the smallest is $\mathfrak{sl}(2,\mathbb{C})_\mathbb{R}\cong\mathfrak{so}(1,3)$.
With this decomposition of $\sigma$ in hand, we can state the attainable signatures.

\begin{theorem}[Attainable real signatures]\label{thm:signature}
Let $\gC$ be semisimple, with $m$ $\sigma$-fixed simple ideals of complex dimensions $d_i$ and $n$ swapped pairs of complex dimensions $e_\alpha$, so that $d=\sum_i d_i+2\sum_\alpha e_\alpha$.
Every nondegenerate $\sigma$-compatible solution $h$ of Eq.~\eqref{eq:master}, of signature $(p,q)$, falls into one of three cases:
\begin{itemize}
\item[(a)] $m\ge1$: $\min(p,q)\ge\max_i d_i\ge3$, whatever the couplings;
\item[(b)] $m=0$, $n=1$: $(p,q)=(2\tilde p,2\tilde q)$ is even--even, and every such signature occurs;
\item[(c)] $m=0$, $n\ge2$: every signature with $p+q=d$ occurs.
\end{itemize}
\end{theorem}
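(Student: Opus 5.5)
The proof works with the complex solutions described by Corollary~\ref{cor:semisimple}: every diagonal block $h^{(ii)}$ vanishes and the off-diagonal blocks are free. We then impose $\sigma$-compatibility, $h(\sigma u,\sigma v)=\overline{h(u,v)}$, which Appendix~\ref{app:B} identifies as the reality condition. The real form is the restriction of $h$ to $\gR=\{x:\sigma x=x\}$. We decompose $\gR$ according to the orbits of $\sigma$ on the simple ideals, write the restricted form in real blocks, and read off the signature block by block.

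\emph{Case (a).} Let $\g_i$ be $\sigma$-fixed. Its real points $\g_i\cap\gR$ form a real subspace of dimension $d_i$. Because $h^{(ii)}=0$, this subspace is totally isotropic for $h|_{\gR}$. A nondegenerate real form of signature $(p,q)$ has maximal isotropic subspaces of dimension $\min(p,q)$, so $\min(p,q)\ge d_i$ for every fixed ideal. Finally, $d_i\ge3$ because every complex simple Lie algebra has dimension at least $3$. This holds whatever the inter-ideal couplings are.

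\emph{Case (b).} Let the single pair be $\g_\alpha,\g_\beta=\sigma(\g_\alpha)$. Then $x\mapsto x+\sigma x$ is a real-linear isomorphism $\g_\alpha\to\gR$. Since both diagonal blocks vanish,
\[
h(x+\sigma x,\,y+\sigma y)=h(x,\sigma y)+h(\sigma x,y).
\]
Define $B(x,y)=h(x,\sigma y)$ on $\g_\alpha\cong\mathbb{C}^{e}$. It is linear in $x$ and antilinear in $y$, and compatibility gives $\overline{B(x,y)}=h(\sigma x,y)=B(y,x)$. Hence $B$ is Hermitian and the real form equals $2\,\mathrm{Re}\,B$. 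Conversely, any Hermitian $B$ determines an off-diagonal block $h^{(\alpha\beta)}(x,z)=B(x,\sigma z)$ that is complex bilinear and $\sigma$-compatible. So the attainable real forms are exactly $2\,\mathrm{Re}$ of arbitrary Hermitian forms. A Hermitian form of signature $(\tilde p,\tilde q)$ on $\mathbb{C}^e$ has real part of signature $(2\tilde p,2\tilde q)$ on $\mathbb{R}^{2e}$. This gives both the even--even restriction and the statement that every such signature is realized.

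\emph{Case (c).} This case is the main obstacle. For two pairs, with real spans $R_1,R_2$, the diagonal real blocks are again of the form $2\,\mathrm{Re}$(Hermitian). I would first show that the cross block $R_1\times R_2\to\mathbb{R}$ is an \emph{arbitrary} real bilinear form. The complex couplings $\g_\alpha\times\g_\gamma$ and $\g_\alpha\times\sigma(\g_\gamma)$ contribute, respectively, the real parts of an arbitrary bilinear and an arbitrary sesquilinear form on $\mathbb{C}^{e_1}\times\mathbb{C}^{e_2}$. Every real bilinear form splits uniquely into such a pair.

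To produce an odd $p$, choose $H_1$ on $R_1$ with one complex null line $N_1$, a real $2$-plane, and even signature on a complement. Choose $H_2$ on $R_2$ nondegenerate and indefinite, and let the cross block $X$ be supported on $N_1$. By congruence, the signature of $\begin{pmatrix}H_1&X\\ X^T&H_2\end{pmatrix}$ equals $\mathrm{sig}(H_2)+\mathrm{sig}(H_1-XH_2^{-1}X^T)$. The Schur complement acts on $N_1$ as $-X_NH_2^{-1}X_N^T$, and by choosing the $2\times 2e_2$ block $X_N$ we can make it $\mathrm{diag}(1,-1)$. The result has signature even$+$even$+(1,1)$, so $p$ is odd. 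Varying the even signatures of the two Hermitian pieces then covers every odd $p$ with $1\le p\le d-1$. Even $p$, including $p=0$ and $p=d$ with definite Hermitian forms, already come from case (b) applied pairwise. The remaining care is bookkeeping at the edges, such as $e_\alpha=3$ with extreme $p$: there I would let $H_2$ be definite and let $-X_NH_2^{-1}X_N^T$ be rank-two semidefinite of the needed sign together with a sign of $H_1$ on $N_1$.
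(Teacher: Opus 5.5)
Your proof is correct. Parts (a) and (b) are the paper's arguments: the real points of a fixed ideal form a totally isotropic subspace, which the Witt index bounds, and a single pair gives a Hermitian coupling whose real form is $2\,\mathrm{Re}\,B$, with each complex eigenvalue contributing two real directions of its sign. In (c), your first step is exactly the paper's Eq.~\eqref{eq:app-crossN}: an arbitrary real bilinear form on $R_1\times R_2$ splits into the real parts of a complex-bilinear piece and a sesquilinear piece.

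Where you differ is the parity flip. The paper first uses $h\to-h$ to assume $p\le d/2$. It then takes a decoupled even--even solution with at least one $-1$ eigenvalue in each of the two pairs, and couples one real direction of each such negative $2$-plane with strength $3$. The resulting block $\left(\begin{smallmatrix}-2&3\\3&-2\end{smallmatrix}\right)$ has eigenvalues $\{1,-5\}$, so exactly one direction changes sign, and every odd $p$ is reached by one uniform construction.

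Your null-line and Schur-complement construction is also valid. Inertia is additive under the Schur complement, and the rows of $X_N$ can be chosen with Gram matrix $\mathrm{diag}(-1,1)$ under the indefinite $H_2^{-1}$. It also shows that the cross block can place on $N_1$ any $2\times2$ form that the inertia of $H_2^{-1}$ allows. However, the indefinite $H_2$ already carries at least two directions of each sign. So the main construction yields only odd $p$ with $3\le p\le d-3$, and this holds for every value of $e_\alpha$, not just $e_\alpha=3$.

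Your edge patch for $p=1$ and $p=d-1$ works. It uses a definite $H_2$, gives $H_1$ a definite sign on $N_1$, and chooses a semidefinite $-X_NH_2^{-1}X_N^{T}$ that pushes exactly one eigenvalue across zero. This is in effect the paper's construction, and using it from the outset would remove the case split. You should also state that for $n\ge3$ the remaining pairs are left decoupled with arbitrary even--even signatures.
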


We sketch the three proofs here and give them in full in Appendix~\ref{app:B}.

\emph{Case (a): a fixed ideal forces a large Witt index.}
For a nondegenerate form of signature $(p,q)$, the largest dimension of a subspace on which the form vanishes identically is $\min(p,q)$; this number is called the Witt index.
For instance, in Minkowski space, a null line is such a subspace, and no null plane exists.
A fixed ideal carries $h^{(ii)}=0$, so $h$ vanishes identically on its real points, a subspace of dimension $d_i$, whatever the couplings to the rest.
The smallest complex simple Lie algebra is $\mathfrak{sl}(2,\mathbb{C})$, of dimension $3$, so $d_i\ge3$, and the Witt index bounds the signature: $\min(p,q)\ge d_i\ge3$.
Nondegenerate solutions in this case exist---$\mathfrak{su}(2)\oplus\mathfrak{su}(2)$ with an invertible coupling, of signature $(3,3)$, appears below---but none is Lorentzian or Euclidean, due to the lower bound.

\emph{Case (b): a single pair gives even--even signatures.}
With one swapped pair, the only block allowed by Corollary~\ref{cor:semisimple} is the coupling between the two ideals of the pair, and compatibility forces it to be a Hermitian matrix $B$.
The form on the real points is $2x^\dagger\bar{B}x$, with $\bar B=B^{T}$ again Hermitian; each eigenvalue of $\bar B$ contributes two real directions of its sign, so $(p,q)=(2\tilde p,2\tilde q)$, and choosing the eigenvalues realizes every even--even signature.
The negative-definite choice on the smallest pair, $\mathfrak{sl}(2,\mathbb{C})_\mathbb{R}\cong\mathfrak{so}(1,3)$, is the Euclidean saddle of Ref.~\cite{Liao:2025yfb}, with $h$ of signature $(0,6)$.

\emph{Case (c): two pairs attain every signature.}
Between two distinct pairs, the cross-couplings assemble into an arbitrary real matrix.
Take one negative direction from each pair, with the diagonal form $\mathrm{diag}(-2,-2)$, and turn on a cross-coupling of size $3$ between them: the block becomes $\left(\begin{smallmatrix}-2&3\\3&-2\end{smallmatrix}\right)$, with eigenvalues $\{1,-5\}$.
Exactly one direction has changed sign.
This adjustment has odd parity and is not available within a single pair, where case (b) forces even--even.
Starting from an even--even choice of the couplings and applying one such sign change covers every $(p,q)$.
The smallest pair has real dimension $2\times3=6$, so two pairs require $d\ge12$.
Hence below twelve dimensions the semisimple sector attains only the signatures of cases (a) and (b), while from twelve on it attains every signature.

Below twelve dimensions the complete list is short (Appendix~\ref{app:B}), and it is built from few blocks.
The smallest complex simple Lie algebra is $\mathfrak{sl}(2,\mathbb{C})$, of dimension $3$, with real forms $\mathfrak{su}(2)$ and $\mathfrak{sl}(2,\mathbb{R})\cong\mathfrak{su}(1,1)$; the next complex simple algebras have dimensions $8$ and $10$; the smallest swapped pair is $\mathfrak{so}(1,3)$, of real dimension $6$, and the next has real dimension $16$.
A real semisimple algebra of dimension $d\le11$ is therefore a sum of summands of real dimensions $3$, $8$, $10$ (fixed ideals) and $6$ (one pair).
Nondegenerate solutions exist only at $d=6$ and $d=9$.
At $d=6$, the algebra is either one pair---case (b), with the $(0,6)$ solution above---or a sum of two three-dimensional ideals, such as $\mathfrak{su}(2)\oplus\mathfrak{su}(2)$, whose invertible couplings give signature $(3,3)$.
At $d=9$, the algebra is a sum of three three-dimensional ideals, or of one such ideal and one pair; the simplest solution is $\mathfrak{su}(2)\oplus\mathfrak{su}(2)\oplus\mathfrak{su}(2)$ with all three couplings the identity, of signature $(3,6)$.
At $d=3,8,10$, the algebra is a single fixed ideal, and Theorem~\ref{thm:nogo} forces $h=0$; at $d=11$, the only decomposition is $3+8$, and $h$ has rank at most $6$, so every solution is degenerate.
No semisimple solution exists below $d=6$, consistent with the survey of Ref.~\cite{Chatzistavrakidis:2011su}, and none of the solutions at $d=6,9$ is Lorentzian.

The threshold at twelve is sharp.
At $d=12$, case (c) yields purely semisimple Lorentzian solutions: $\mathfrak{so}(1,3)\oplus\mathfrak{so}(1,3)$ with signature $(1,11)$, the timelike direction a cross-coupled combination of a boost generator from each factor (Appendix~\ref{app:D}).
Ten, the critical dimension of the superstring, lies strictly below this threshold (Fig.~\ref{fig:flow}, Table~\ref{tab:sectors}).

\subsection{Lorentzian signature requires the radical}\label{sec:lorentz}
This subsection applies the physical inputs to the classification.
The inputs are two: the signature of $h$ must be Lorentzian, $(p,q)=(1,d-1)$, and in the ten-dimensional model the algebra dimension satisfies $d\le k=10$.
We state the results for every $d\le11$, since the analysis is the same for all $d$ below the twelve-dimensional threshold of Theorem~\ref{thm:signature}.
Since $h_{ab}=C^\mu{}_a\,\eta_{\mu\nu}\,C^\nu{}_b$, and since $C$ is injective when $h$ is nondegenerate, $h$ is the spacetime bilinear form $\eta$ evaluated along the configuration: for $v\in\gR$, $h(v,v)=(Cv)^\mu\eta_{\mu\nu}(Cv)^\nu$ is the $\eta$-norm of the target-space direction $Cv$.
We can therefore read the timelike direction off $h$ directly, and we call $v$ timelike if $h(v,v)>0$, spacelike if $h(v,v)<0$, and null if $h(v,v)=0$.
A Lorentzian solution has a single positive direction: the time of the emergent background.
\begin{theorem}[Lorentzian solutions require a nonzero radical]\label{thm:main}
Let $\gR$ be a real Lie algebra of dimension $d\le11$, and let $h$ be a nondegenerate solution of Eq.~\eqref{eq:master} of Lorentzian signature $(1,d-1)$.
Then the radical of $\gR$ is nonzero.
In particular, every Lie-algebraic background of the ten-dimensional IKKT model that contains a timelike direction has $\rad\neq0$; Appendix~\ref{app:C} extends this to degenerate $h$.
\end{theorem}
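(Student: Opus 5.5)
We argue by contradiction. Suppose $\rad=0$, so that $\gR$ is semisimple and its complexification $\gC$ is semisimple as well. The goal is to show that no nondegenerate $\sigma$-compatible solution of Eq.~\eqref{eq:master} with $d\le11$ has signature $(1,d-1)$.

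First, the complexification reduces the problem to the classification already done. Real solutions on $\gR$ are exactly the $\sigma$-compatible complex solutions (Sec.~\ref{sec:real}, Appendix~\ref{app:B}). Hence Corollary~\ref{cor:semisimple} and Theorem~\ref{thm:signature} apply to $h$. Write $\gC=\bigoplus_i\g_i$ and let $\sigma$ have $m$ fixed ideals and $n$ swapped pairs, as in Theorem~\ref{thm:signature}.

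Next, rule out each case of Theorem~\ref{thm:signature} for Lorentzian signature. In case (a), $m\ge1$ gives $\min(p,q)\ge3$, which is incompatible with $p=1$. In case (b), $m=0$ and $n=1$ force $p$ to be even, again incompatible with $p=1$. Case (c) requires $n\ge2$. Every complex simple Lie algebra has dimension at least $3$, so each swapped pair contributes real dimension $2e_\alpha\ge6$ and two pairs force $d\ge12$, contradicting $d\le11$. The remaining possibility is $m=n=0$, which means $\gC=0$ and $d=0$; this is excluded because a Lorentzian form needs $p=1$ and hence $d\ge1$. Together these cases give $\rad\neq0$.

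The delicate point is making sure case (a) really applies whenever $m\ge1$, independently of the couplings to the swapped pairs. The real points $\g_i^\sigma$ of a fixed ideal form a real subspace of dimension $d_i$. By Corollary~\ref{cor:semisimple}, $h$ vanishes identically on $\g_i^\sigma\times\g_i^\sigma$, because the diagonal block $h^{(ii)}$ is zero before any reality condition is imposed. The Witt-index bound for a nondegenerate form, namely that a totally isotropic subspace has dimension at most $\min(p,q)$, then yields $3\le d_i\le\min(p,q)$. It also has to be checked that the lower bound $\dim\ge3$ on complex simple algebras enters only through the $\mathfrak{sl}(2,\mathbb{C})$ minimum. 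For the ten-dimensional clause, $d=k=10$ gives nondegenerate $h$ directly. The degenerate case with $d<k$ is deferred to Appendix~\ref{app:C} as stated, since it needs the pullback bounds $p(h)\le p(\eta)$ and $q(h)\le q(\eta)$ together with the Witt-index control of the null directions of $\eta$.
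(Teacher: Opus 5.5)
Your proposal is correct and follows the paper's proof: assume $\rad=0$, apply Theorem~\ref{thm:signature}, and exclude case (a) by $\min(p,q)\ge3$, case (b) by the parity of $p$, and case (c) by $d\ge12$, with the degenerate ten-dimensional case deferred to Appendix~\ref{app:C}. Your extra remarks (the trivial case $m=n=0$, and re-deriving the Witt-index bound for a fixed ideal) add detail but do not change the argument.
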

\emph{Proof.}
If $\rad=0$, then $\gR$ is semisimple and Theorem~\ref{thm:signature} applies to $h$.
Case (a) would give $\min(p,q)\ge3$, but a Lorentzian signature has $\min(1,d-1)=1$; case (b) would make $p$ even, but $p=1$; case (c) would require $d\ge12$.
No case admits $(1,d-1)$ with $d\le11$, so $\rad\neq0$. \hfill$\square$

Theorem~\ref{thm:main} states that the radical is nonzero; it does not yet locate the timelike direction inside the algebra.
The next corollary locates it.

\begin{corollary}[Semisimple sector and time in Lorentzian solutions]\label{cor:radical}
Let $\gR$ be a real Lie algebra of dimension $d\le11$, and let $h$ be a nondegenerate solution of Eq.~\eqref{eq:master} of Lorentzian signature.
Then:
\begin{itemize}
\item[(i)] $\gss$ is either $0$ or the single swapped pair $\mathfrak{sl}(2,\mathbb{C})_\mathbb{R}\cong\mathfrak{so}(1,3)$;
\item[(ii)] the restriction of $h$ to $\gss$ is negative definite: every nonzero vector of $\gss$ is spacelike, and every timelike or null vector of $\gR$ has a nonzero component in $\rad$.
\end{itemize}
\end{corollary}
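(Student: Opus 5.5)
The plan is to reduce the Lorentzian problem on $\gR$ to the semisimple classification of Sec.~\ref{sec:real}, applied to the restriction of $h$ to a Levi factor, and then to use the Witt index of a Lorentzian form to exclude everything except one negative-definite $\mathfrak{so}(1,3)$.

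\emph{Step 1: the restricted form solves the master equation on $\gss$.} Fix a Levi decomposition $\gR=\gss\ltimes\rad$ and a basis adapted to it. Take $c\in\gss$ and expand the sum in Eq.~\eqref{eq:master}. Since $\rad$ is an ideal, every term $[t^a,[t^b,c]]$ in which $t^a$ or $t^b$ lies in $\rad$ also lies in $\rad$. Applying the projection $\gR\to\gR/\rad\cong\gss$, which is a Lie algebra homomorphism, removes all such terms. What remains is
\[
h|_{\gss}{}_{ab}\,[t^a,[t^b,c]]=0 \quad\text{on } \gss .
\]
So $h|_{\gss}$ solves Eq.~\eqref{eq:master} on $\gss$, although it may be degenerate. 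Complexifying and using Appendix~\ref{app:B} for $\sigma$-compatibility, Corollary~\ref{cor:semisimple} then says that every $\sigma$-fixed simple ideal and every single simple ideal of a swapped pair is $h$-isotropic.

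\emph{Step 2: the Witt index argument gives (i).} A nondegenerate Lorentzian form has Witt index $1$, so every totally $h$-isotropic subspace of $\gR$ has dimension at most $1$. A $\sigma$-fixed ideal gives a real totally isotropic subspace of dimension $d_i\ge3$, which is impossible, so $m=0$. By Theorem~\ref{thm:main}, $\rad\neq0$, hence $\dim\gss\le10$. Swapped pairs have real dimension $6$ or at least $16$, so $n\le1$. If $n=1$, the pair has real dimension $6$ and is $\mathfrak{sl}(2,\mathbb{C})_\mathbb{R}\cong\mathfrak{so}(1,3)$. This proves (i).

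\emph{Step 3: negative definiteness gives (ii).} By case (b) of Theorem~\ref{thm:signature}, without assuming nondegeneracy, $h|_{\gss}=2x^\dagger\bar Bx$ with $\bar B$ Hermitian. Each eigenvalue of $\bar B$ contributes two real directions of its sign, and each zero eigenvalue contributes two directions to the kernel. We rule out the possibilities in turn.
\begin{itemize}
\item[$\bullet$] A positive eigenvalue would give $p(h)\ge p(h|_{\gss})\ge2$, contradicting $p=1$.
\item[$\bullet$] A zero eigenvalue would give a real $2$-plane in $\gss$ on which $h|_{\gss}$, and hence $h$, vanishes identically. This is a totally isotropic subspace of dimension $2$, contradicting Witt index $1$.
\end{itemize}
Hence $h|_{\gss}$ is negative definite. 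Finally, let $v=s+r$ with $s\in\gss$ and $r\in\rad$, and suppose $v\neq0$ has $h(v,v)\ge0$. If $r=0$, then $v=s\neq0$ and $h(s,s)<0$, a contradiction. So $r\neq0$.

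The step I expect to be most delicate is Step 1. The issue is that the radical indices in $h_{ab}$ do not simply drop out of Eq.~\eqref{eq:master}; they only drop out after projecting modulo $\rad$. A second point of care is that Theorem~\ref{thm:signature} is stated for nondegenerate forms, so in Step 3 I would reuse its block description for the possibly degenerate restriction, rather than its signature conclusion.
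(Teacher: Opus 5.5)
Your proposal is correct and follows the paper's proof in Appendix~\ref{app:B}. You restrict Eq.~\eqref{eq:master} to the Levi factor (your projection modulo $\rad$ is the paper's argument that the $\gss$ and $\rad$ components vanish separately), exclude $\sigma$-fixed ideals by the Witt index, and use the block form $2x^\dagger\bar Bx$ of case (b), valid without nondegeneracy, together with $p=1$ and the Witt index to force negative definiteness. The only difference is cosmetic: you invoke Theorem~\ref{thm:main} to get $\dim\gss\le10$, whereas $d\le11$ alone already rules out two swapped pairs (which need $12$) and any pair larger than $\mathfrak{so}(1,3)$ (which needs $16$).
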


Appendix~\ref{app:B} gives the detailed proof; here we sketch the idea, statement by statement.

The first step is that $h|_{\gss}$ solves Eq.~\eqref{eq:master} on $\gss$ by itself, whatever the brackets between $\gss$ and $\rad$.
In general the Levi decomposition is semidirect, $[\gss,\rad]\subseteq\rad$, and these brackets need not vanish, so the different blocks of $h$---the block on $\gss$, the block on $\rad$, and the couplings between the two---could enter Eq.~\eqref{eq:master} together.
For $x\in\gss$, however, the contribution of the $\gss$ block is a sum of double brackets of elements of $\gss$ and lies in $\gss$, because $\gss$ is closed under the bracket, $[\gss,\gss]\subseteq\gss$, while the contributions of the other two blocks contain a generator of $\rad$ inside the brackets and lie in $\rad$, because $\rad$ is an ideal, $[\rad,\g]\subseteq\rad$.
Since $\gss$ and $\rad$ are linearly independent, the two parts vanish separately, and the $\gss$ part is Eq.~\eqref{eq:master} on $\gss$ for $h|_{\gss}$.

For statement (i): Theorem~\ref{thm:signature} now applies to $h|_{\gss}$.
By case (a) of Theorem~\ref{thm:signature}, a $\sigma$-fixed simple ideal would give a subspace of dimension at least $3$ on which $h$ vanishes, exceeding the Witt index $\min(1,d-1)=1$ of a Lorentzian form.
By case (c) of Theorem~\ref{thm:signature}, two swapped pairs would require $d\ge12$; and a single pair built on any simple algebra larger than $\mathfrak{sl}(2,\mathbb{C})$ has real dimension at least $16$.
The only possibilities within $d\le11$ are $\gss=0$ and the single pair $\mathfrak{so}(1,3)$.

For statement (ii): by case (b) of Theorem~\ref{thm:signature}, the single pair contributes an even number of positive directions to $h|_{\gss}$.
A restriction of $h$ has positive index at most $p=1$, and an even number that is at most $1$ is $0$.
Hence $h|_{\gss}\preceq0$.
The kernel of $h|_{\gss}$ also has even dimension, by case (b), and it is a subspace on which $h$ vanishes identically, so its dimension cannot exceed the Witt index $1$ of $h$; even and at most $1$ again forces $0$.
Hence $h|_{\gss}$ is negative definite: every nonzero vector of $\gss$ is spacelike.
A timelike or null vector $v$ has $h(v,v)\ge0$ and therefore cannot lie in $\gss$: its component in $\rad$ is nonzero.

The argument above uses only the $\gss$ part of Eq.~\eqref{eq:master}; the remaining parts are additional conditions.
For $c\in\gss$, the $\rad$ part constrains the couplings between $\gss$ and $\rad$; for $c\in\rad$, the equation involves $h|_{\gss}$ through the brackets between $\gss$ and $\rad$; Appendix~\ref{app:B} writes these remaining parts out.
These conditions decide whether a given algebra with $[\gss,\rad]\neq0$ admits nondegenerate Lorentzian solutions at all, a case-by-case question that Sec.~\ref{sec:radical} takes up; statements (i) and (ii) hold for every algebra.

Theorem~\ref{thm:main} and Corollary~\ref{cor:radical} assume $h$ nondegenerate.
In the ten-dimensional model a configuration can span fewer than ten directions, and $h$ can then be degenerate; the pullback bounds of Sec.~\ref{sec:setup} extend the analysis to this case (Appendix~\ref{app:C}), and the conclusion holds for either target signature.
With $\eta$ Euclidean or Lorentzian and $d\le k=10$, the semisimple part of every Lie-algebraic configuration is $0$ or the single pair $\mathfrak{sl}(2,\mathbb{C})_\mathbb{R}\cong\mathfrak{so}(1,3)$, and $h$ is negative definite on it, of signature $(0,6)$.
In particular, a purely semisimple configuration of the ten-dimensional model, of either signature, is the $(0,6)$ saddle of Ref.~\cite{Liao:2025yfb}: it occupies only spacelike directions of the target space, and no semisimple configuration reaches the timelike direction.

\section{Solutions with a radical: direct and semidirect sums}\label{sec:radical}

By Corollary~\ref{cor:radical}, the timelike direction of a Lorentzian solution has a component in the radical.
This section constructs explicit solutions and determines how the two parts of the Levi decomposition can combine, either as a direct sum, $[\gss,\rad]=0$, or as a genuine semidirect sum, $[\gss,\rad]\neq0$.
One example at each of $d=9$, $10$, and $11$ demonstrates the possibilities.

\subsection{Direct sums}

In a direct sum, generators of distinct summands commute, so every double bracket in Eq.~\eqref{eq:master} stays inside one summand.
The equation decouples into the blocks, and the couplings between summands are unconstrained; the radical block must therefore solve Eq.~\eqref{eq:master} on the radical alone.
For a Lorentzian solution with a nontrivial semisimple part, that part is $\mathfrak{so}(1,3)$ (Corollary~\ref{cor:radical}), so for $d\le10$ the radical has dimension at most four, and the classification tables of all Lie algebras of dimension up to five~\cite{Chatzistavrakidis:2011su} list every candidate radical.
If $[\rad,\rad]$ is central (a $2$-step nilpotent ideal), the double commutator in Eq.~\eqref{eq:master} vanishes identically, and \emph{any} $h$ of \emph{any} signature solves the constraint there, so time can sit directly on a nilpotent direction.
The Heisenberg algebra $\mathfrak{h}_3$, $[X,Y]=Z$, with $h=\mathrm{diag}(+1,-1,-1)$, is the minimal Lorentzian solution, of signature $(1,2)$; it is the Moyal plane with noncommuting time, $[X^0,X^1]\propto\mathbb{I}$.
Nilpotent algebras of this type underlie the solutions found in the survey of Ref.~\cite{Chatzistavrakidis:2011su} and the nilmanifold backgrounds of Ref.~\cite{ChatzistavrakidisJonke2012}.
Sectors combine by direct sum; $\mathfrak{so}(1,3)\oplus\mathfrak{h}_3$ with $h=(0,6)\oplus(1,2)$ is a $(1,8)$ solution, and the block-diagonal $h$ splits the matrix Laplacian into a metric product of the Euclidean geometry of Ref.~\cite{Liao:2025yfb} with a Moyal plane carrying time.

Solvable algebras that are not nilpotent behave differently. Their internal solutions are degenerate, with null directions that the couplings then cure.
The smallest is $\mathfrak{aff}(1)=\{X,Y:[X,Y]=Y\}$, where Eq.~\eqref{eq:master} forces $h_{XX}=h_{XY}=0$: the generator $X$ that rescales $Y$ must be null and orthogonal to everything else, so $\mathfrak{aff}(1)$ alone has only degenerate solutions.
A coupling to any other direction cures this.
Already $\mathfrak{aff}(1)\oplus\mathbb{R}$, with $h_{XZ}=1$, $h_{YY}=-1$, and all other entries zero, is a nondegenerate $(1,2)$ solution at $d=3$: the null generator $X$ and the commuting generator $Z$ form a hyperbolic pair, and the timelike direction is $X+Z$.
The same mechanism explains the expanding solution of Refs.~\cite{Kim:2011ts,Kim:2012mw}, whose algebra $[X,Y_i]=Y_i$, $[Y_i,Y_j]=0$ likewise forces $h_{XX}=h_{XY_i}=0$: with $h$ equal to the flat metric, that background solves only the regularized equation, in which the spatial multiplier supplies the missing term (Appendix~\ref{app:E}).
Coupling the null generator $X$ of $\mathfrak{aff}(1)$ to $\mathfrak{so}(1,3)$ gives a nondegenerate $(1,7)$ solution at $d=8$, and appending two commuting directions gives a full-rank $(1,9)$ background of the physical model at $d=10$, on $\mathfrak{so}(1,3)\oplus\mathfrak{aff}(1)\oplus\mathbb{R}^{2}$ (Appendix~\ref{app:D}).
Lorentzian signature requires a nonzero radical, not nilpotency.

The example at $d=9$ is built on $\mathfrak{e}(2)=\{J,P_1,P_2:[J,P_1]=P_2,\ [J,P_2]=-P_1\}$, the algebra of rotations and translations of the plane, where Eq.~\eqref{eq:master} forces $h_{JJ}=h_{JP_i}=0$ and leaves the $P$ block free, so $J$ is null internally, exactly like $X$ above; the two-dimensional Poincar\'e algebra $\mathfrak{e}(1,1)$ behaves identically.
This null rotation generator is the algebraic form of the ``split noncommutativity'' of Ref.~\cite{Steinacker:2011wb}: the propagating fuzzy cylinder of that work places the generator $J$ of $\mathfrak{e}(2)$ along a light-like direction of the target space, and its propagating plane wave places the rotation generator on a light-cone coordinate $X^{-}$ paired with $X^{+}$ through $[X^{+},X^{-}]\propto\mathbb{I}$, a five-dimensional solvable algebra whose null generator is cured by a hyperbolic coupling.
The observation of Ref.~\cite{Steinacker:2011wb} that these compactified solutions exist only for Minkowski signature is, in the present language, the statement that a null generator needs a null direction of $\eta$.
On $\mathfrak{so}(1,3)\oplus\mathfrak{e}(2)$, order the basis as (the six generators of $\mathfrak{so}(1,3)$; $J$; $P_1,P_2$) and take
\begin{equation}\label{eq:e218}
h=\begin{pmatrix} h_{0} & c & 0\\ c^{T} & 0 & 0\\ 0 & 0 & -\mathbb{I}_2 \end{pmatrix},
\end{equation}
where $h_{0}$ is the negative-definite $(0,6)$ solution on $\mathfrak{so}(1,3)$ from case (b) of Theorem~\ref{thm:signature} and $c\in\mathbb{R}^{6}$ is the coupling of $J$.
Completing the square in $J$ brings $h$ to $h_{0}\oplus(-c^{T}h_{0}^{-1}c)\oplus(-\mathbb{I}_2)$ by a change of basis, and $-c^{T}h_{0}^{-1}c>0$ because $h_{0}$ is negative definite.
Hence for every $c\neq0$ the solution is nondegenerate of signature $(1,8)$, with the timelike direction carried by the null generator $J$---the same odd-parity adjustment as in Theorem~\ref{thm:signature}(c), now operating between $\gss$ and $\rad$.

\subsection{Semidirect sums}

We now turn to genuine semidirect sums, $[\gss,\rad]\neq0$.
For a Lorentzian solution the semisimple part is $\mathfrak{so}(1,3)$ (Corollary~\ref{cor:radical}), and the brackets $[\gss,\rad]\subseteq\rad$ make the radical a representation space of $\mathfrak{so}(1,3)$.
The Jacobi identity for $x\in\gss$ and $u,v\in\rad$ reads $[x,[u,v]]=[[x,u],v]+[u,[x,v]]$; it states that transforming the bracket of two radical elements equals bracketing the transformed elements, so the bracket of the radical is a map compatible with the representation---an equivariant map.
Representation theory then fixes the possibilities.

\begin{theorem}[Semidirect structure]\label{thm:semi}
Let $h$ be a nondegenerate solution of Eq.~\eqref{eq:master} of Lorentzian signature $(1,d-1)$, $d\le11$, with $[\gss,\rad]\neq0$.
Then:
\begin{itemize}
\item[(i)] such a solution requires $d\ge10$; that is, for $d\le9$ every nondegenerate Lorentzian solution has $[\gss,\rad]=0$;
\item[(ii)] if $d=10$, the radical is abelian, $\rad\cong\mathbb{R}^{4}$, carrying the vector or the Weyl-spinor representation of $\mathfrak{so}(1,3)$, and nondegenerate solutions exist only in the spinor case;
\item[(iii)] at $d=11$ a nonabelian radical first becomes possible, and it is realized, for example with the timelike direction along a central element.
\end{itemize}
\end{theorem}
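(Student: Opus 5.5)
By Corollary~\ref{cor:radical}(i), $[\gss,\rad]\neq0$ forces $\gss\cong\mathfrak{so}(1,3)$, so $\dim\rad=d-6\le5$. The radical is then a real representation of $\mathfrak{so}(1,3)$ with a nontrivial piece. I will read the constraints off its decomposition into real irreducibles of $\mathfrak{sl}(2,\mathbb{C})_\mathbb{R}$.

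\emph{Dimension count.} The smallest nontrivial real irreducibles of $\mathfrak{so}(1,3)$ are four-dimensional: the vector $(\tfrac12,\tfrac12)$ and the Weyl spinor $(\tfrac12,0)$ regarded as real. The next ones, such as the adjoint of real dimension $6$, exceed what $\dim\rad\le5$ allows. If $d\le9$ then $\dim\rad\le3$, so the action is trivial, which contradicts $[\gss,\rad]\neq0$. This gives (i).

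\emph{The case $d=10$.} Here $\rad=V$ is one of these four-dimensional irreducibles. Since $\rad$ is solvable, $[\rad,\rad]$ is a proper $\gss$-invariant subspace of $\rad$, and by irreducibility it is $0$, so $\rad$ is abelian. I then split Eq.~\eqref{eq:master} by blocks.
\begin{itemize}
\item For $c\in\rad$, the $\rad\rad$ block drops out, because $[\rad,\rad]=0$.
\item The $\gss\gss$ block contributes $h_0^{ab}\rho(t_a)\rho(t_b)c$, a quadratic Casimir-type operator of $V$ built from the negative-definite form $h_0$ of Corollary~\ref{cor:radical}(ii).
\item The cross block gives terms $[u,[x,c]]$, which vanish, and $[x,[u,c]]=0$, which also vanishes.
\end{itemize}
So the condition is that $h_0^{ab}\rho(t_a)\rho(t_b)$ annihilates $V$.

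In the Weyl basis write $h_0=\mathrm{Re}$ of a Hermitian coupling $B$ between the two $\mathfrak{sl}(2)$ factors, as in case (b). The operator on $V$ becomes $\sum B_{ij}\,J^{L}_iJ^{R}_j$ plus its conjugate. On the vector $(\tfrac12,\tfrac12)$ this is a nonzero rank-type operator; for $B=\mathbb{I}$ it is a multiple of $\vec J_L\cdot\vec J_R$, which has eigenvalues $\tfrac14$ and $-\tfrac34$ and is never zero. On the spinor $(\tfrac12,0)$ one factor acts trivially, so the mixed product vanishes and the condition holds identically.

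For $c\in\gss$ the constraint involves the couplings $h_{\gss\rad}$. This $\rad$-component must vanish; I expect it to force the coupling to be an equivariant-type map, and to be solvable, for example with zero coupling. It remains to find a Lorentzian form on $\rad$ compatible with the $\rad$-parts of the equations for $c\in\gss$, namely $h_{\rad\rad}$ contracted with $[u,[v,x]]$. These require $h_{\rad\rad}$ to satisfy an invariance-type condition. For the spinor, $\mathbb{C}^2$ as a real space carries an invariant-type form of signature $(2,2)$, and I would need a non-invariant $h_{\rad\rad}$ of signature $(1,3)$ whose contraction still vanishes. I expect this to work because $[u,[v,x]]$ lies in $[\rad,\rad]=0$. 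Indeed the $\rad\rad$ block contributes $[u,[v,x]]\in[\rad,\rad]=0$, so $h_{\rad\rad}$ is free and I take it of signature $(1,3)$.

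\emph{Main obstacle.} Excluding the vector case at $d=10$ for a general Hermitian $B$, not just $B=\mathbb{I}$, is the step I expect to be hardest. I would show that $\mathrm{Re}\,\mathrm{tr}$-type Casimirs give $\sum B_{ij}J^L_iJ^R_j\otimes$ acting on $\mathbb{C}^2\otimes\mathbb{C}^2$, which is $B$ viewed as an operator and is injective for $B\neq0$. The case $B=0$ is excluded because $h_0$ is definite. The $\gss$–$\rad$ couplings cannot rescue this, since they do not enter the $c\in\rad$ equation.

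\emph{The case $d=11$.} I would exhibit an example: $\rad=\mathbb{R}^4_{\rm spinor}\oplus\mathbb{R}Z$ with $[u,v]=\omega(u,v)Z$, where $\omega$ is the $\mathfrak{so}(1,3)$-invariant symplectic form on the real spinor, taken as the imaginary part of $\epsilon$. This radical is a Heisenberg algebra and is $2$-step nilpotent, so the radical double brackets vanish. Choosing $h=h_0\oplus(-\mathbb{I}_4)\oplus(+1)_Z$ and checking the remaining cross terms gives a $(1,10)$ solution with time along the central element $Z$. Finally, nonabelian radicals are impossible for $d=10$, as shown above.
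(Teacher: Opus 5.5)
Your parts (i) and (ii) are sound. Part (i) is the paper's representation bound, and two steps of (ii) take a different route from the paper.

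First, you get $[\rad,\rad]=0$ at $d=10$ from solvability plus irreducibility: by Jacobi, $[\rad,\rad]$ is a proper $\gss$-invariant subspace of an irreducible module. The paper instead applies Schur's lemma to the equivariant bracket $\Lambda^2\rad\to\rad$ after decomposing $\Lambda^2V_{\mathbb{C}}$ and $\Lambda^2W_{\mathbb{C}}$. Your argument is shorter.

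Second, you exclude the vector case by combining the forced $B=0$ with the negative definiteness of $h|_{\gss}$ from Corollary~\ref{cor:radical}(ii). The paper appeals to a direct computation of the solution space. Yours needs no computation. Moreover, because $B=0$ makes the six-dimensional $\gss$ totally isotropic in a ten-dimensional space, the same idea rules out nondegenerate solutions of every signature. One wording fix: what you need is injectivity of the map $B\mapsto\sum_{ij}B_{ij}\,\sigma_i\otimes\sigma_j$, i.e.\ linear independence of the $\sigma_i\otimes\sigma_j$. The resulting operator itself can be singular for $B\neq0$.

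The gap is in (iii). You justify the $d=11$ example by saying the radical is $2$-step nilpotent, ``so the radical double brackets vanish.'' That disposes of $[r_m,[r_n,y]]$ for $y\in\rad$. It does not dispose of the $\rad\rad$ block in the equation for $c=x\in\gss$. There $[r_n,x]=-\rho(x)r_n$ lies in the spinor part, not in the center, so
\[
\sum_{m,n}h^{mn}[r_m,[r_n,x]]=-\sum_{m,n}h^{mn}\,\omega\big(r_m,\rho(x)r_n\big)\,Z .
\]
This $Z$-component, $\mathrm{tr}\big(\omega\,\rho(x)\,h_{QQ}\big)=0$ for all $x\in\mathfrak{so}(1,3)$, is the one genuinely new condition at $d=11$. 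Neither the $\gss$--$\rad$ couplings nor the $Z$ row enter it. It is also not automatic: its six conditions cut the ten-dimensional space of symmetric $Q$ blocks down to four.

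Your choice $h_{QQ}=-\mathbb{I}_4$ does satisfy it, but for a reason you have to supply. Since $\varepsilon$ is complex-bilinear and $\rho(x)$ is complex-linear, $\omega(iv,\rho(x)\,iv)=-\omega(v,\rho(x)v)$. In a real basis of the form $\{v_1,iv_1,v_2,iv_2\}$ the trace therefore cancels in pairs.

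By contrast, $h_{QQ}=\mathrm{diag}(1,-1,-1,-1)$, which would put time on $Q_1$, leaves a $Z$-component proportional to $\omega(Q_1,\rho(x)Q_1)$. That is nonzero for suitable $x$, yet your argument would accept this choice just as readily.

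With this check added, together with the remark that $Z$ is central so no entry of its row enters any equation, your $h$ is a $(1,10)$ solution and (iii) follows.
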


Appendix~\ref{app:G} gives the proof; the ideas are the following.

The smallest nontrivial real representation of $\mathfrak{so}(1,3)$ has dimension four.
Complex conjugation exchanges the two simple ideals of the complexification---the swapped pair of Sec.~\ref{sec:real}---so every real representation pairs the two chiral factors, and nothing nontrivial fits below the vector and the Weyl spinor.
For $d\le9$ the radical has dimension at most three and must therefore be a trivial representation, which is statement (i).

Explicitly, the vector representation is the familiar action on a four-vector $(x^0,x^1,x^2,x^3)$, the rotations mixing the $x^i$ and the boosts mixing $x^0$ with the $x^i$; the Weyl-spinor representation acts on a complex doublet through the Pauli matrices $\sigma_i$, with $[J_i,\cdot]=-\tfrac{i}{2}\sigma_i$ and $[K_i,\cdot]=-\tfrac{1}{2}\sigma_i$, displayed in Eq.~\eqref{eq:weylact} below.

At $d=10$ the radical is one of these two four-dimensional representations, and its own bracket must vanish.
The bracket is antisymmetric, $[u,v]=-[v,u]$, so it is an equivariant map from the antisymmetric pairs $u\wedge v$ of radical elements back to the radical; decomposing the antisymmetric pairs into irreducible representations, none matches the radical itself, and an equivariant map between distinct irreducible representations vanishes (Appendix~\ref{app:G}).
The equations for $c\in\rad$ then involve $h|_{\gss}$ through the representation, and they separate the two cases.
On the vector representation they exclude every nonzero $h|_{\gss}$, so the Poincar\'e algebra admits no nondegenerate solution at all.
On the spinor representation they impose nothing, because every solution $h|_{\gss}$ couples only the two chiral factors (Corollary~\ref{cor:semisimple}) and each factor acts as zero on one half of the representation.
This is statement (ii); the following two examples realize (ii) and (iii).

\subsubsection{Spinor translations at $d=10$}

This example sits at the critical dimension of superstring theory.
Take four commuting generators $Q_1,\dots,Q_4$ and let $\mathfrak{so}(1,3)$, with rotations $J_i$ and boosts $K_i$, act by the Weyl-spinor representation.
On the complex doublet $\mathcal{Q}=(Q_1+iQ_3,\;Q_2+iQ_4)^{T}$, the brackets are
\begin{equation}\label{eq:weylact}
[J_i,\mathcal{Q}]=-\tfrac{i}{2}\,\sigma_i\,\mathcal{Q},\qquad
[K_i,\mathcal{Q}]=-\tfrac{1}{2}\,\sigma_i\,\mathcal{Q},
\end{equation}
so that $[X,Q_a]=\rho(X)^{b}{}_{a}Q_b$ with $\rho$ the realified spinor matrices.
The only difference from the Poincar\'e algebra is the representation of the translations, a spinor instead of a vector.
The algebra is bosonic, since no Grassmann numbers appear.
Faithful unitary representations exist, by the induced-representation construction for a semidirect sum with commuting translations~\cite{Mackey1952}.
Then
\begin{equation}\label{eq:weylsol}
h=\mathrm{diag}(\underbrace{-1,\dots,-1}_{6},\;+1,-1,-1,-1)
\end{equation}
solves Eq.~\eqref{eq:master}.
For $c\in\mathfrak{so}(1,3)$, the $Q$-terms vanish because the translations commute, and the remaining condition is the standalone $\mathfrak{so}(1,3)$ equation, solved by $h\propto\delta$ (Sec.~\ref{sec:complex}).
For $c=Q_b$, the condition is $\sum_i\big(\rho(J_i)^2+\rho(K_i)^2\big)Q_b=0$, and Eq.~\eqref{eq:weylact} gives $\rho(K_i)=-i\,\rho(J_i)$, so the two sums cancel term by term.
The solution is nondegenerate of signature $(1,9)$; it is a genuinely semidirect Lorentzian background of the ten-dimensional model, with every Lorentz generator acting on the translations and the timelike direction on the spinorial translation $Q_1$.
In the induced representation the $Q_a$ act as multiplication operators, so the timelike matrix has a continuous spectrum, as time should.

\subsubsection{A nonabelian radical at $d=11$}

The spinor representation carries an invariant antisymmetric form, $\varepsilon(\mathcal{Q},\mathcal{Q}')=\mathcal{Q}_1\mathcal{Q}'_2-\mathcal{Q}_2\mathcal{Q}'_1$, preserved by Eq.~\eqref{eq:weylact}; its real part $\omega=\mathrm{Re}\,\varepsilon$ is a real antisymmetric form on the $Q_a$.
Extend the algebra by one central element $Z$,
\begin{equation}\label{eq:heis}
[Q_a,Q_b]=\omega_{ab}\,Z,\qquad [Z,\cdot\,]=0.
\end{equation}
The radical becomes a five-dimensional Heisenberg algebra $\mathfrak{h}_5$ on which $\mathfrak{so}(1,3)$ still acts.
The only new condition, the $Z$-component of the equations for $c=x\in\mathfrak{so}(1,3)$, reads $\mathrm{tr}\big(\omega\,\rho(x)\,h_{QQ}\big)=0$ for all $x$.
The block $h_{QQ}=-\mathbb{I}_4$ satisfies it: the condition becomes $\sum_a\omega(Q_a,\rho(x)Q_a)=0$, the complex bilinearity of $\varepsilon$ gives $\omega(iv,\rho(x)\,iv)=-\omega(v,\rho(x)v)$, and the real basis pairs each $v$ with $iv$, so the sum cancels (Appendix~\ref{app:G}).
Hence
\begin{equation}\label{eq:heissol}
h=\mathrm{diag}(\underbrace{-1,\dots,-1}_{6},\;\underbrace{-1,-1,-1,-1}_{4},\;+1)
\end{equation}
solves Eq.~\eqref{eq:master}, nondegenerate of signature $(1,10)$.
Here the timelike direction is the center: the noncommutativity of the spinorial translations generates time.
Two remarks make this precise.
First, since $Z$ is central, no entry of $h$ in its row enters any equation, so the timelike direction can equally be tilted toward the $Q_a$; for instance $h_{Q_1Z}=2$ with all diagonal entries $-1$ is again a $(1,10)$ solution.
Second, an irreducible unitary representation---the Schr\"odinger representation of $\mathfrak{h}_5$, with the Lorentz sector acting by quadratic operators in the oscillator representation~\cite{Folland1989}---maps the central generator $Z$ to a multiple of the identity, and a constant matrix direction is trivial by the translation symmetry $A^\mu\to A^\mu+c^\mu\mathbb{I}$ of the model.
The timelike matrix of Eq.~\eqref{eq:heissol} therefore acquires a spectrum only in a reducible representation---a direct integral of Schr\"odinger representations over the central parameter, in which the noncommutativity scale of the spinorial translations plays the role of the time coordinate, the same mechanism that gives the commuting spatial matrices of the expanding-universe solutions~\cite{Kim:2011ts} their spread spectrum---or after the tilt just described.

\section{Discussion}\label{sec:disc}
The classification answers the two questions of the introduction: which solutions the Lie-algebraic ansatz admits for either target signature, and where the timelike direction resides.
Within the Lie-algebraic class---every algebra of dimension $d\le11$, hence every background of the ten-dimensional model---the timelike direction of a nondegenerate Lorentzian solution lies in the solvable radical; the semisimple sector supports Euclidean, split, or degenerate geometry only.
When the semisimple part acts on the radical, the physical dimension is the threshold.
At $d=10$ the radical must be abelian, and only spinor-valued translations yield semidirect backgrounds, while the Poincar\'e algebra yields none; a nonabelian radical first appears at $d=11$ (Sec.~\ref{sec:radical}).

How to extract an effective geometry from a given background is, in general, an open question.
Here the representation, kept implicit so far, becomes central: one reads the geometry off the operators acting on the Hilbert space of a chosen unitary representation.
One proposal is the semiclassical picture of Ref.~\cite{Steinacker:2010rh}: the commutators of the background define a Poisson tensor $\theta$, and fluctuations $A^\mu\to A^\mu+\delta A^\mu$ propagate in an effective metric $G=e^{-\sigma}\,\theta\,g\,\theta^{T}$, built pointwise from $\theta$ and the pullback $g$ of $\eta$ to the semiclassical brane; this is the picture adopted in Ref.~\cite{Liao:2025yfb} for the $(0,6)$ solution, whose effective geometry is a four-dimensional Euclidean space.
The classification bounds every signature obtainable in this picture.
Each step from $\eta$ to $G$---$h=C^{T}\eta C$, the pullback $g$ of $h$, and the contraction with $\theta$---is a congruence $s\mapsto M^{T}sM$ by a real matrix.
A congruence cannot increase the number of positive or negative eigenvalues: if $M^{T}sM$ is positive definite on a subspace $W$, then $s$ is positive definite on $M(W)$.
Hence $p(G)\le p(h)\le p(\eta)$ pointwise, and likewise for the negative counts: an effective timelike direction requires a timelike direction of $h$.

The expanding-universe backgrounds of the Lorentzian model~\cite{Kim:2011ts,Kim:2012mw} have exactly the structure that Theorem~\ref{thm:main} requires: commutative space, $[X^i,X^j]=0$, acted on by time through $[X^0,X^i]\propto X^i$, is a solvable algebra.
Both branches of that construction, however, depend on the infrared regulators.
The cutoff terms add a multiplier $\lambda_{(\mu)}A^\mu$ to the equation of motion.
On the $\mathfrak{su}(2)$ and $\mathfrak{su}(1,1)$ branches, with $h$ proportional to the inverse Killing form, the double bracket equals the adjoint Casimir eigenvalue times $A^\mu$, so a solution exists exactly when the multiplier matches that eigenvalue; on the expanding branch, the spatial multiplier supplies the term that the undeformed equation would force to vanish by making the time generator null (Appendix~\ref{app:E}).
The split-noncommutativity solutions of Ref.~\cite{Steinacker:2011wb} realize the null-generator mechanism of Sec.~\ref{sec:radical} without regulators; their fuzzy-sphere factors, multiplied by exponentials of noncommutative coordinates, do not close into a finite-dimensional Lie algebra, which is how they evade Theorem~\ref{thm:nogo}.
The present results are kinematical; they complement dynamical signature-selection mechanisms such as the Lorentz-invariant mass deformation of Refs.~\cite{Nishimura:2022alt,Hirasawa2023reg}.

The BFSS matrix model~\cite{Banks:1996vh} differs at exactly this point.
It is the $0{+}1$-dimensional reduction: nine matrices $X^i(t)$ in an external, commutative time, which enters as the derivation $d/dt$---the canonical pair $[\partial_t,t]=1$, an infinite-dimensional realization of the solvable structure required above; compactifying one direction of the IKKT model produces BFSS with the compact direction acting as a covariant derivative~\cite{Taylor:1996ik}.
Time-independent BFSS configurations satisfy $\sum_j[X^j,[X^j,X^i]]=0$, which is Eq.~\eqref{eq:EOM} with Euclidean $\eta$ and $k=9$, so Theorem~\ref{thm:nogo} applies: the fuzzy sphere is not a time-independent BFSS solution---it is a vacuum of the mass-deformed BMN model~\cite{Berenstein:2002jq}, whose $\mu$-terms supply the Casimir eigenvalue---while the flat membrane $[X^1,X^2]\propto\mathbb{I}$, which is nilpotent, is one.
Conversely, every solution of Eq.~\eqref{eq:master} with $d\le9$ and definite $h$ is a time-independent BFSS background.

The construction also seeds time-dependent solutions: promoting the coefficients to functions of time turns the BFSS equation of motion into an ordinary differential equation for $C(t)$ on a fixed algebra---equivalently, a time-dependent $h(t)=C(t)^{T}\eta\,C(t)$---with the time-independent solutions as fixed points.
The collapsing fuzzy sphere, $\mathfrak{su}(2)$ with a time-dependent overall scale, is the simplest example, consistent with Theorem~\ref{thm:nogo}: simple algebras, excluded in the time-independent problem, reappear as time-dependent solutions.
The minimal Heisenberg solution $[A^0,A^1]\propto\mathbb{I}$ corresponds in BFSS variables to a D0-brane in uniform motion, $[D_t,X^1]=v\,\mathbb{I}$.
BFSS contains solvable time by definition; in the IKKT model it must arise as a property of the solution, and by Theorem~\ref{thm:main} there is no other option within the class.

The covariant formulations use the structures identified here in infinite-dimensional form.
In the covariant-derivative interpretation~\cite{HanadaKawaiKimura2006,Tsuchiya2024reg} and in the derivation of general relativity from the model in Ref.~\cite{HoKawaiSteinacker2026}, the matrices act as differential operators: the vacuum is the abelian algebra $A_\mu\sim i\partial_\mu$, and derivations acting on commutative functions carry the geometry, with brackets of Heisenberg type $[\partial_\mu,x^\nu]=\delta_\mu^\nu$.
The covariant quantum spacetimes of Ref.~\cite{SperlingSteinacker2019} start from the semisimple $\mathfrak{so}(4,2)$ and are consistent with Theorem~\ref{thm:nogo} in the two available ways: the generators do not close into a finite-dimensional Lie algebra (Snyder-type~\cite{Snyder:1946qz}, outside the present class), and the construction lives in the mass-deformed model.
Determining which subalgebras of differential operators solve Eq.~\eqref{eq:master} is an open problem that would connect the two settings.

The following problems remain open.

(i) In case (a) of Theorem~\ref{thm:signature}, the fixed ideals force the lower bound $\min(p,q)\ge d_i$, but which signatures above this bound occur, for a given set of ideals and couplings, remains undetermined.

(ii) Solvable algebras of dimension six and higher are not classified, and Appendix~\ref{app:F} (Table~\ref{tab:landscape}) collects the known solutions with $d\le11$.
Even in dimension $\le5$, where the algebras are classified, the survey of Ref.~\cite{Chatzistavrakidis:2011su} tests only the flat metric in the tabulated basis; solving Eq.~\eqref{eq:master} for general $h$ on each classified algebra is a finite linear-algebra task that has not been carried out, and the $\mathfrak{aff}(1)\oplus\mathbb{R}$ example of Sec.~\ref{sec:radical} shows that it yields new Lorentzian solutions.

(iii) The semidirect analysis of Sec.~\ref{sec:radical} covers $d\le11$; a systematic treatment of the remaining conditions---the couplings and the equations for $c\in\rad$ on higher-dimensional representations, the full scaling family at $d=11$, and $d\ge12$---is open.

(iv) The extraction of the effective geometry beyond the semiclassical picture above is open, and with it the question of which effective signatures the solutions realize, rather than merely bound.
Part of this question is the choice of representation: central and abelian generators act as constants in irreducible representations and acquire a spectrum only in direct integrals (Sec.~\ref{sec:radical}).

(v) We have not analyzed the stability of the explicit solutions under fluctuations, or the supersymmetry they preserve, if any.

(vi) How does the classification change under mass and Myers deformations?
They add source terms to Eq.~\eqref{eq:master}, so the Casimir obstruction becomes a solvability condition (Appendix~\ref{app:E}) and semisimple algebras re-enter; which signatures the deformed equation then allows, and where time can reside, is open.
The question connects the classification to the polarized IKKT and BMN models~\cite{Hartnoll:2024csr,Komatsu:2024bop,Berenstein:2002jq}, where supersymmetric solutions are studied holographically.

(vii) Whether quantum fluctuations select the time-carrying solvable sectors, as the Lorentzian Monte Carlo results~\cite{Kim:2011cr,Anagnostopoulos:2022dak} suggest, requires a large-$N$ analysis.

\section*{Acknowledgments}
H.L. thanks Hikaru Kawai, Jun Nishimura, and Cheng-Tsung Wang for useful and insightful discussions.
H.L. is supported in part by the National Science and Technology Council (NSTC) grants 112-2112-M-002-024-MY3 and 112-2628-M-002-003-MY3.

\begin{figure}[tb]
\centering
\begin{tikzpicture}[every node/.style={font=\footnotesize}, node distance=4mm]
\node[draw, rounded corners, align=center] (g) at (0,2.1) {Lie-algebraic solution:\quad {$\g\cong\gss\ltimes\rad$}};
\node[draw, rounded corners, align=center] (ss) at (-1.9,0.9) {semisimple $\gss$};
\node[draw, rounded corners, align=center] (r) at (1.9,0.9) {radical $\rad$};
\node[align=center] (ssc) at (-1.9,-0.75) {fixed ideal: $\min(p,q)\ge3$\\ one pair: $(2\tilde p,2\tilde q)$\\ $\ge2$ pairs ($d\ge12$): all $(p,q)$\\ \emph{rigid below 12}};
\node[align=center] (rc) at (1.9,-0.75) {nilpotent: any $(p,q)$\\ null $+$ coupling: $(1,q)$\\ \emph{carries time}};
\node[draw, rounded corners, fill=gray!12, align=center] (concl) at (0,-2.45) {$d\le11$ (IKKT: $d\le10$): Lorentzian $\Rightarrow$ $\rad\neq0$,\ {$h|_{\gss}\prec0$}\\ time points into $\rad$\quad (sharp at $d=12$)};
\draw[->] (g) -- (ss); \draw[->] (g) -- (r);
\draw[->] (ss) -- (ssc); \draw[->] (r) -- (rc);
\draw[->] (ssc) -- (concl); \draw[->] (rc) -- (concl);
\end{tikzpicture}
\caption{\label{fig:flow}Signatures of Lie-algebraic solutions, organized by the Levi decomposition.
By Theorem~\ref{thm:signature}, below twelve dimensions the semisimple sector attains only constrained signatures, and from twelve on it attains all of them; for $d\le11$ the timelike direction of a nondegenerate Lorentzian solution lies in the radical, and $h$ is negative definite on the semisimple part (Theorem~\ref{thm:main}, Corollary~\ref{cor:radical}).}
\end{figure}

\begin{table*}[tb]
\caption{\label{tab:sectors}Signature constraints on the sectors of a Lie-algebraic solution with nondegenerate $h$.
The first three rows are the three cases of Theorem~\ref{thm:signature}; the last two are the radical mechanisms of Sec.~\ref{sec:radical}.
Below $d=12$, only the non-semisimple rows can supply a lone timelike direction (Corollary~\ref{cor:radical}).}
\begin{ruledtabular}
\begin{tabular}{lllllc}
Sector & Case & Structure & Signatures & Example & Time? \\
\hline
$\sigma$-fixed simple ideal & \ref{thm:signature}(a) & totally isotropic & $\min(p,q)\ge d_i\ge3$ & $\mathfrak{su}(2)^{\oplus2}$: $(3,3)$ & no \\
single swapped pair & \ref{thm:signature}(b) & Hermitian coupling $B$ & $(2\tilde p,2\tilde q)$, even--even & $\mathfrak{so}(1,3)$: $(0,6),(2,4)$ & no \\
coupled swapped pairs & \ref{thm:signature}(c) & free cross-couplings & all $(p,q)$; needs $d\ge12$ & $\mathfrak{so}(1,3)^{\oplus2}$: $(1,11)$ & if $d\ge12$ \\
$2$-step nilpotent radical & --- & $[\rad,\rad]$ central & unconstrained & $\mathfrak{h}_3$: $(1,2)$ & yes \\
solvable, null coupling & --- & hyperbolic pairing with $\gss$ & $(+,-)$ pairs & $\mathfrak{so}(1,3){\oplus}\mathfrak{aff}(1)[{\oplus}\mathbb{R}^2]$: $(1,7)[(1,9)]$ & yes \\
\end{tabular}
\end{ruledtabular}
\end{table*}

\appendix

\section{Proof of Theorem~\ref{thm:nogo}}\label{app:A}

This appendix proves Theorem~\ref{thm:nogo} in three steps: an explicit computation for $\mathfrak{su}(2)$, the two-spin form of the constraint for a general simple algebra, and a check of the required Casimir eigenvalue against the classification.

\subsection{Warm-up: $\mathfrak{su}(2)$}
With $[J_a,J_b]=\epsilon_{abc}J_c$ and $\sum_d\epsilon_{bcd}\epsilon_{ade}=\delta_{be}\delta_{ca}-\delta_{ba}\delta_{ce}$,
\begin{equation}\label{eq:app-su2}
  h_{ab}[J_a,[J_b,J_c]]=h_{ab}\,\epsilon_{bcd}\,\epsilon_{ade}\,J_e=\big(h-\mathrm{tr}(h)\,\mathbb{I}\big)_{ce}J_e .
\end{equation}
Hence $K_H=h-\mathrm{tr}(h)\,\mathbb{I}$, and $K_H=0$ forces $h=\mathrm{tr}(h)\,\mathbb{I}$; the trace of this relation gives $\mathrm{tr}\,h=3\,\mathrm{tr}\,h$, so $h=0$.
Split $h=\tfrac13\mathrm{tr}(h)\,\mathbb{I}+h_0$ into a singlet and a traceless, spin-$2$ part: then $K_H=h_0-\tfrac23\mathrm{tr}(h)\,\mathbb{I}$, so the two parts are multiplied by the nonzero numbers $1$ and $-2$, and no nonzero $h$ can satisfy $K_H=0$.
The general proof below shows that every simple algebra behaves this way, with the two numbers replaced by $c_\g-\tfrac12C_\lambda$, one for each irreducible constituent of $h$.
(In this real basis $\sum_c\ad_c^2=-2\,\mathbb{I}$, so $c_\g=-2$ and $C_{\text{spin-}2}=-6$, and $c_\g-\tfrac12C_\lambda$ reproduces $-2$ and $1$.)

\subsection{Two-spin form of the constraint}
Fix the Killing form $\kappa(x,y)=\mathrm{tr}(\ad_x\ad_y)$ of the complex simple algebra $\g$ and choose a basis with $\kappa(T_a,T_b)=\delta_{ab}$, so that the structure constants $[T_a,T_b]=f_{abc}T_c$ are totally antisymmetric and $\ad_a^{T}=-\ad_a$.
Write $H$ for the matrix $(h_{ab})$, viewed as an operator on $\g$.
Total antisymmetry gives the rearrangement
\begin{equation}\label{eq:app-KH}
  K_H\equiv\sum_{a,b}h_{ab}\,\ad_a\ad_b=\sum_c \ad_c\,H\,\ad_c :
\end{equation}
both sides have the matrix elements $\sum_{a,b,c}h_{ab}f_{acx}f_{bcy}$ once $f_{cax}=-f_{acx}$ is used.
Now read $H$ as a two-particle state.
The algebra acts on $\mathrm{End}(\g)\cong\g\otimes\g$ by $H\mapsto\ad_xH-H\ad_x$: left multiplication by $\ad_x$ is the generator $\vec J_1$ of the first particle, and right multiplication by $-\ad_x$ is the generator $\vec J_2$ of the second.
In this language $\sum_c\ad_c^2=c_\g\,\mathbb{I}$ reads $\vec J_1^{\,2}=\vec J_2^{\,2}=c_\g$ (with $c_\g=1$ in the present normalization; we keep it symbolic since only ratios matter), and Eq.~\eqref{eq:app-KH} says
\begin{equation}\label{eq:app-JJ}
  K_H=-\,\vec J_1\!\cdot\!\vec J_2\,H,\qquad
  \vec J_1\!\cdot\!\vec J_2\,H\equiv-\sum_c\ad_c\,H\,\ad_c .
\end{equation}
The total Casimir of the two-particle system is $\Omega(H)\equiv\sum_c[\ad_c,[\ad_c,H]]=(\vec J_1+\vec J_2)^{2}H$, and the familiar identity $(\vec J_1+\vec J_2)^{2}=\vec J_1^{\,2}+\vec J_2^{\,2}+2\vec J_1\!\cdot\!\vec J_2$ reads $\Omega(H)=2c_\g H-2K_H$.
Hence
\begin{equation}\label{eq:app-casimir}
  K_H=0\iff \Omega(H)=2c_\g\,H .
\end{equation}
Symmetric $h$ live in $\Sym^2(\g)\subset\g\otimes\g$ ($K_H$ is symmetric whenever $H$ is, by $\ad_c^{T}=-\ad_c$), and $\Omega$ acts on an irreducible constituent $V_\lambda$ by the number $C_\lambda=\langle\lambda,\lambda+2\rho\rangle$.
On that constituent $K_H=\big(c_\g-\tfrac12C_\lambda\big)H$, exactly as in the warm-up, so nonzero solutions exist if and only if $\Sym^2(\g)$ contains a constituent with $C_\lambda=2c_\g$, that is, if and only if the ratio $2$ appears in Table~\ref{tab:casimir}.

\subsection{The eigenvalue $2c_\g$ never occurs}
It never does.
For the classical families the candidate ratios are $2\pm(\text{nonzero})$, together with $n/(n-2)$ for $\mathfrak{so}_n$---equal to $2$ only at $n=4$, which is not simple---and $n/(n+1)<1$ for $\mathfrak{sp}_{2n}$; for the exceptional algebras we inspect the listed rational values directly.
Constituents of dimension zero at small rank (the adjoint is absent from $\Sym^2\mathfrak{sl}_2$, the constituent with eigenvalue $2-\tfrac2n$ is absent from $\Sym^2\mathfrak{sl}_3$, and the fourth universal constituent vanishes for $\mathfrak{g}_2$) only remove candidates and cannot produce the value $2$.
Hence the only solution is $h=0$, proving Theorem~\ref{thm:nogo}.

Two features of the table are worth noting.
The trivial constituent (the inverse Killing form) has eigenvalue $0$: there $K_H$ is the adjoint Casimir itself, nonzero.
The top constituent $V_{2\theta}$ has $\langle2\theta,2\theta+2\rho\rangle=2c_\g+2\langle\theta,\theta\rangle>2c_\g$ and is the only one exceeding $2c_\g$.
The table follows the universal decomposition of $\Sym^2(\g)$~\cite{Vogel1999,Deligne1996,LandsbergManivel2006,Mkrtchyan2012}: with Vogel parameters $(\alpha,\beta,\gamma)$, normalized so $\alpha=-2$, and $t=\alpha+\beta+\gamma$, one has $\Sym^2(\g)=\mathbb{C}\oplus Y_2(\alpha)\oplus Y_2(\beta)\oplus Y_2(\gamma)$ with Casimir eigenvalues
\begin{equation}\label{eq:app-vogel}
  \{0,\;4t-2\alpha,\;4t-2\beta,\;4t-2\gamma\},\qquad 2c_\g=4t ;
\end{equation}
since $\beta,\gamma>0$ for every simple algebra, only $4t-2\alpha=4t+4$ exceeds $4t$, and no eigenvalue equals $4t$.
(The constituents of $\g\otimes\g$ with eigenvalue exactly $2c_\g$ lie in $\Lambda^2\g$; there the two spins can be orthogonal, which is why the symmetry of $h$ matters.)
Table~\ref{tab:casimir} is the classification-complete check of this statement; we have also confirmed Theorem~\ref{thm:nogo} by solving the linear system~\eqref{eq:master} directly for $\mathfrak{sl}_n$ ($n\le4$), $\mathfrak{so}_n$ ($5\le n\le7$), and $\mathfrak{sp}_{2n}$ ($n\le3$).

\begin{table}[tb]
\caption{\label{tab:casimir}Quadratic-Casimir eigenvalues, in units of the adjoint Casimir $c_\g$, on the irreducible constituents of $\Sym^2(\g)$ across the classification of complex simple Lie algebras.
Low-rank coincidences ($\mathfrak{so}_3\cong\mathfrak{sl}_2\cong\mathfrak{sp}_2$, $\mathfrak{so}_5\cong\mathfrak{sp}_4$, $\mathfrak{so}_6\cong\mathfrak{sl}_4$) give consistent values.
The value $2$, required by Eq.~\eqref{eq:app-casimir}, never occurs.}
\begin{ruledtabular}
\begin{tabular}{ll}
$\g$ & eigenvalues on $\Sym^2(\g)$, in units of $c_\g$ \\
\hline
$\mathfrak{sl}_2$ & $0,\;3$ \\
$\mathfrak{sl}_n$ ($n\ge3$) & $0,\;1,\;2-\tfrac2n,\;2+\tfrac2n$ \\
$\mathfrak{so}_n$ ($n\ge5$) & $0,\;\tfrac{n}{n-2},\;2-\tfrac4{n-2},\;2+\tfrac2{n-2}$ \\
$\mathfrak{sp}_{2n}$ ($n\ge2$) & $0,\;\tfrac{n}{n+1},\;2-\tfrac1{n+1},\;2+\tfrac2{n+1}$ \\
$\mathfrak{g}_2$ & $0,\;\tfrac76,\;\tfrac52$ \\
$\mathfrak{f}_4$ & $0,\;\tfrac43,\;\tfrac{13}9,\;\tfrac{20}9$ \\
$\mathfrak{e}_6$ & $0,\;\tfrac43,\;\tfrac32,\;\tfrac{13}6$ \\
$\mathfrak{e}_7$ & $0,\;\tfrac43,\;\tfrac{14}9,\;\tfrac{19}9$ \\
$\mathfrak{e}_8$ & $0,\;\tfrac43,\;\tfrac85,\;\tfrac{31}{15}$ \\
\end{tabular}
\end{ruledtabular}
\end{table}

\section{Reality condition; proofs of Theorem~\ref{thm:signature} and Corollary~\ref{cor:radical}}\label{app:B}

This appendix states the reality condition in components, proves Theorem~\ref{thm:signature} and the complete list below twelve dimensions, and proves Corollary~\ref{cor:radical}.

\subsection{Reality condition and extension}
An antilinear involutive automorphism $\sigma$~\cite{Knapp2002} specifies a real form $\gR=\mathrm{Fix}(\sigma)$; in components $\sigma(x)=S\bar x$ with $S\bar S=\mathbb{I}$.
A complex solution $h$ of Eq.~\eqref{eq:master} descends to a real symmetric form on $\gR$ if and only if it is $\sigma$-compatible, $\overline{h}=S^{T}hS$.
Choosing a basis $\{r^a\}$ of $\sigma$-fixed vectors, $r^a=T^a{}_b\,t^b$, the real form is $h^{\mathbb{R}}_{ab}=h(r^a,r^b)=(ThT^{T})_{ab}$, real and symmetric, and the structure constants in this basis are real.
Conversely, every real symmetric solution of Eq.~\eqref{eq:master} on $\gR$ extends $\mathbb{C}$-bilinearly to a $\sigma$-compatible solution on $\gC$; the real and complex problems are therefore equivalent.
For a swapped pair we choose Chevalley (real-structure-constant) bases so that $\sigma(x,y)=(\bar y,\bar x)$, i.e., $S=\left(\begin{smallmatrix}0&\mathbb{I}\\\mathbb{I}&0\end{smallmatrix}\right)$, with fixed locus $\{(x,\bar x)\}$.

\subsection{Proof of Theorem~\ref{thm:signature}}

\emph{(a) Witt bound.}
By Corollary~\ref{cor:semisimple}, $h^{(ii)}=0$ on a fixed ideal, so $h^{\mathbb{R}}$ vanishes identically on $\g_{i,\mathbb{R}}$---a totally isotropic subspace---of dimension $d_i=\dim_{\mathbb{C}}\g_i\ge3$, irrespective of all off-diagonal blocks.
The Witt index of a nondegenerate form of signature $(p,q)$ is $\min(p,q)$, so $\min(p,q)\ge d_i$. \hfill$\square$

\emph{(b) Single pair.}
With $m=0$, $n=1$, the only block allowed by Corollary~\ref{cor:semisimple} is the coupling $h=\left(\begin{smallmatrix}0&B\\B^{T}&0\end{smallmatrix}\right)$, and compatibility gives $\bar B=B^{T}$: $B$ is Hermitian, nondegenerate if and only if $h$ is.
On the fixed locus the form is $h^{\mathbb{R}}\big((x,\bar x),(x,\bar x)\big)=2x^{T}B\bar x=2x^{\dagger}\bar Bx$, and $\bar B=B^{T}$ is Hermitian with the same eigenvalues as $B$.
Diagonalizing $\bar B=U\mathrm{diag}(\lambda_i)U^{\dagger}$ and writing $z=U^{\dagger}x$ gives $2\sum_i\lambda_i[(\mathrm{Re}\,z_i)^2+(\mathrm{Im}\,z_i)^2]$: each eigenvalue contributes two real squares of its sign, so $(p,q)=(2\tilde p,2\tilde q)$; conversely $B=\mathrm{diag}(\pm1)$ realizes every even--even signature. \hfill$\square$

\emph{(c) Coupled pairs.}
Label the copies of pairs $\alpha\neq\beta$ by $(1,2)$ and $(3,4)$.
Corollary~\ref{cor:semisimple} leaves the cross blocks $K^{13},K^{14}$ arbitrary, and compatibility fixes their partners, $K^{24}=\overline{K^{13}}$, $K^{23}=\overline{K^{14}}$.
The real pairing of the fixed loci is
\begin{equation}\label{eq:app-crossN}
  F(x,u)=2\,\mathrm{Re}\big(x^{T}K^{13}u\big)+2\,\mathrm{Re}\big(x^{T}K^{14}\bar u\big):
\end{equation}
since every real-bilinear form on $\mathbb{C}^{e_\alpha}\times\mathbb{C}^{e_\beta}$ (as real spaces) splits uniquely into parts complex-linear and antilinear in $u$, Eq.~\eqref{eq:app-crossN} exhausts \emph{all} real $2e_\alpha\times2e_\beta$ coupling matrices.
Now fix a signature $(p,d-p)$; by $h\to-h$ we may take $p\le d/2$.
If $p$ is even, decouple the pairs and choose the $B_\alpha$ eigenvalues with $p/2$ entries $+1$: done by (b).
If $p$ is odd, choose the decoupled solution with $(p-1)/2$ eigenvalues $+1$, placing at least one $-1$ eigenvalue in each pair (this is possible, since the number of $-1$ eigenvalues is $(d-p+1)/2\ge2$ for $p\le d/2$ and $d\ge12$).
Then couple one real direction of a negative $2$-plane of the first pair to one of the second with strength $3$: that $2\times2$ block becomes $\left(\begin{smallmatrix}-2&3\\3&-2\end{smallmatrix}\right)$ with eigenvalues $\{1,-5\}$, converting exactly one negative direction into a positive one and leaving all others unchanged.
The result is nondegenerate of signature $(p,d-p)$. \hfill$\square$

\subsection{The complete list below twelve dimensions}
The building blocks of a real semisimple algebra of dimension $d\le11$ are fixed ideals of real dimension $d_i\in\{3,8,10\}$ (the complex simple dimensions not exceeding $11$) and swapped pairs of real dimension $6$ (the next pair dimension is $16$).
The realizable dimensions and decompositions are $d=3\;[3]$; $6\;[3{+}3$ or one pair$]$; $8\;[8]$; $9\;[3{+}3{+}3$ or $3{+}$pair$]$; $10\;[10]$; $11\;[3{+}8]$.
If the algebra is a single fixed ideal ($d=3,8,10$), Theorem~\ref{thm:nogo} forces $h=0$.
For $d=11=3{+}8$ the only nonvanishing blocks form a coupling $h=\left(\begin{smallmatrix}0&B\\B^{T}&0\end{smallmatrix}\right)$ with $B$ a real $3\times8$ matrix, so $\mathrm{rank}\,h=2\,\mathrm{rank}\,B\le6<11$: every solution is degenerate.
At $d=6$ and $d=9$ nondegenerate solutions exist: the single pair with invertible Hermitian $B$, which is case (b); two fixed three-dimensional ideals with $h=\left(\begin{smallmatrix}0&B\\B^{T}&0\end{smallmatrix}\right)$, $B\in GL(3,\mathbb{R})$, of signature $(3,3)$; and, at $d=9$, $\mathfrak{su}(2)^{\oplus3}$ with all couplings $\mathbb{I}_3$, i.e., $h=(J-\mathbb{I})\otimes\mathbb{I}_3$ with $J$ the all-ones $3\times3$ matrix, of eigenvalues $\{2,-1,-1\}$ (each threefold) and signature $(3,6)$.
Hence nondegenerate semisimple solutions with $d\le11$ occur exactly at $d=6$ and $d=9$, and by Theorem~\ref{thm:signature} none is Lorentzian. \hfill$\square$

\subsection{Restriction of Eq.~\eqref{eq:master} to the Levi subalgebra}
Write $\g=\gss\oplus\rad$ for the Levi decomposition as a vector space, with $\gss$ a subalgebra and $\rad$ an ideal; the brackets $[\gss,\rad]\subseteq\rad$ need not vanish.
Let $h$ solve Eq.~\eqref{eq:master} on $\g$ and take $x\in\gss$.
The contribution of the $\gss$ block of $h$ is $\sum_{ij}h^{ij}[s_i,[s_j,x]]$, with $\{s_i\}$ a basis of $\gss$; it lies in $\gss$ because $\gss$ is closed under the bracket, $[\gss,\gss]\subseteq\gss$.
The contributions of the coupling and $\rad$ blocks contain a generator of $\rad$ inside the brackets and lie in $\rad$, because $\rad$ is an ideal, $[\rad,\g]\subseteq\rad$.
Since $\gss$ and $\rad$ are linearly independent, the two parts vanish separately: the restriction $h|_{\gss}$ solves Eq.~\eqref{eq:master} on the semisimple algebra $\gss$ alone.
This step uses neither nondegeneracy nor the signature of $h$.
With $\{r_m\}$ a basis of $\rad$ and blocks $h^{ij}$, $h^{im}$, $h^{mn}$, the remaining parts are, for $t^c=x\in\gss$, the $\rad$ part
\begin{equation}\label{eq:app-radpart}
\begin{aligned}
&\sum_{i,m}h^{im}\big([s_i,[r_m,x]]+[r_m,[s_i,x]]\big)\\
&\quad+\sum_{m,n}h^{mn}[r_m,[r_n,x]]=0,
\end{aligned}
\end{equation}
and, for $t^c=y\in\rad$, the full equation
\begin{equation}\label{eq:app-radeq}
\begin{aligned}
&\sum_{i,j}h^{ij}[s_i,[s_j,y]]+\sum_{i,m}h^{im}\big([s_i,[r_m,y]]+[r_m,[s_i,y]]\big)\\
&\quad+\sum_{m,n}h^{mn}[r_m,[r_n,y]]=0.
\end{aligned}
\end{equation}
These parts constrain the couplings and involve $h|_{\gss}$ through the brackets between $\gss$ and $\rad$; Sec.~\ref{sec:radical} and Appendix~\ref{app:G} solve them for the semidirect algebras, and the proof of Corollary~\ref{cor:radical} below uses only the $\gss$ part.

\subsection{Proof of Corollary~\ref{cor:radical}}
As shown above, $h|_{\gss}$ solves Eq.~\eqref{eq:master} on $\gss$; within $\gss$, distinct simple ideals commute, so for $c\in\g_i$ the equation involves only the $\g_i$ block.
(i) If $\g_i$ is $\sigma$-fixed, Theorem~\ref{thm:nogo} gives $h^{(ii)}=0$, and its real points form a totally isotropic subspace of dimension $\ge3$, exceeding the Witt index $\min(1,d-1)=1$ of a Lorentzian form: contradiction.
So $\gss$ consists of swapped pairs; each has real dimension $2e_\alpha\ge6$, so two pairs would need $d\ge12$, and a single pair has $2e\le11$, forcing $e=3$ (the complex simple dimensions being $3,8,10,14,\dots$): $\gss=0$ or $\gss\cong\mathfrak{sl}(2,\mathbb{C})_\mathbb{R}\cong\mathfrak{so}(1,3)$.
(ii) Since $h|_{\gss}$ solves Eq.~\eqref{eq:master} on $\gss$, its $\mathbb{C}$-bilinear extension to $(\gss)_\mathbb{C}$ is a $\sigma$-compatible solution there, possibly degenerate.
For the single pair, the analysis of case (b) does not use nondegeneracy and applies here: the diagonal blocks vanish and the coupling $B$ is Hermitian, so the form on $\gss$ is $2x^\dagger\bar Bx$, whose positive index is $2\cdot\#\{\lambda_i(B)>0\}$ and whose kernel is $\{x:\bar Bx=0\}$, of real dimension $2\dim_{\mathbb{C}}\ker B$---both even numbers.
The positive index of the restriction of $h$ to any subspace is at most that of $h$, which is $1$; even and $\le1$ forces $0$, so $h|_{\gss}\preceq0$.
The kernel of $h|_{\gss}$ is a subspace of $\g$ on which $h$ vanishes identically, hence of dimension at most the Witt index $1$ of $h$; even and $\le1$ again forces $0$, so $h|_{\gss}$ is nondegenerate.
Negative semidefinite and nondegenerate means negative definite: $h(v,v)<0$ for every nonzero $v\in\gss$.
A vector with $h(v,v)\ge0$, timelike or null, therefore has a nonzero component in $\rad$. \hfill$\square$

\section{Pullback bounds for $d<k$ and the ten-dimensional model}\label{app:C}

This appendix treats configurations with $d<k$ and proves the ten-dimensional statements of Sec.~\ref{sec:lorentz}, for either target signature.
If the $k$ matrices span an algebra of dimension $d<k$, then $C$ has rank $d$, $h=C^{T}\eta C$ is the pullback of $\eta$, possibly degenerate, and Eq.~\eqref{eq:EOM} still implies Eq.~\eqref{eq:master} because $C$ is injective.
The same injectivity maps subspaces on which $h$ is positive (negative) definite onto subspaces on which $\eta$ is positive (negative) definite, and $h$-isotropic subspaces onto $\eta$-isotropic subspaces; hence $p(h)\le p(\eta)$ and every $h$-isotropic subspace has dimension at most the Witt index $\min(p_\eta,q_\eta)$.
Let $\gss$ be the semisimple part of $\gR$; by Appendix~\ref{app:B}, $h|_{\gss}$ solves Eq.~\eqref{eq:master} on $\gss$ whether or not $h$ is degenerate, so the block structure of Corollary~\ref{cor:semisimple} applies to it.
First, every $\sigma$-fixed simple ideal is $h$-isotropic (Theorem~\ref{thm:nogo}) of dimension $d_i\ge3$, while the Witt index of $\eta$ is $0$ (Euclidean) or $1$ (Lorentzian): no fixed ideal can occur.
Second, two swapped pairs require $d\ge12>10$, and a single pair on any simple algebra beyond $\mathfrak{sl}_2$ has real dimension $\ge16$; so $\gss$ is $0$ or the single $\mathfrak{sl}_2$ pair, $\mathfrak{sl}(2,\mathbb{C})_{\mathbb{R}}\cong\mathfrak{so}(1,3)$.
Third, for that pair $h|_{\gss}$ has the form $2x^\dagger\bar Bx$ of Theorem~\ref{thm:signature}(b), with positive index $2\tilde p$ and kernel of even dimension $2\dim_{\mathbb{C}}\ker B$.
The kernel is $h$-isotropic, hence of dimension $\le1$, hence $0$: $B$ is nondegenerate.
The positive index satisfies $2\tilde p\le p(\eta)\le1$, so $\tilde p=0$.
Hence $h$ is negative definite on $\gss$, of signature $(0,6)$, and the semisimple part occupies only spacelike directions of the target space.
In particular, a purely semisimple configuration of the ten-dimensional model of either signature is the $\mathfrak{so}(1,3)$ saddle of Ref.~\cite{Liao:2025yfb}, and no semisimple configuration reaches the timelike direction.
This proves the ten-dimensional statements of Sec.~\ref{sec:lorentz}.
The same argument shows that no algebra of dimension $\le5$ with nonzero semisimple part---necessarily a fixed $\mathfrak{su}(2)$ or $\mathfrak{sl}(2,\mathbb{R})$---hosts a configuration of either model, which explains the outcome of the survey of Ref.~\cite{Chatzistavrakidis:2011su}. \hfill$\square$

\section{Explicit Lorentzian solutions}\label{app:D}

This appendix records the explicit solutions used in Secs.~\ref{sec:lorentz} and~\ref{sec:radical}; we have checked each of them by substituting into Eq.~\eqref{eq:master}.
Section~\ref{sec:radical} displays the semidirect solutions and Appendix~\ref{app:G} treats them.

\subsection{Purely semisimple, $d=12$: sharpness of Theorem~\ref{thm:main}}
Take $\gC=\mathfrak{sl}_2^{(1)}\oplus\cdots\oplus\mathfrak{sl}_2^{(4)}$ in Chevalley bases $\{H,E,F\}$, with $\sigma$ swapping $1{\leftrightarrow}2$ and $3{\leftrightarrow}4$, so $\gR=\mathfrak{sl}(2,\mathbb{C})_{\mathbb{R}}^{\oplus2}\cong\mathfrak{so}(1,3)^{\oplus2}$.
Set the intra-pair blocks $h^{(12)}=h^{(34)}=-\mathbb{I}_3$ (Hermitian), the cross-pair blocks $h^{(13)}_{HH}=h^{(14)}_{HH}=\tfrac34$ with $\sigma$-partners $h^{(24)}=\overline{h^{(13)}}$, $h^{(23)}=\overline{h^{(14)}}$, and all other entries---in particular all four diagonal blocks---to zero, so Corollary~\ref{cor:semisimple} and $\sigma$-compatibility hold by construction.
On the fixed locus, parametrized by $(x,\bar x,u,\bar u)$, the quadratic form is
\begin{equation}\label{eq:sm-d12}
  Q=-2|x|^{2}-2|u|^{2}+6\,\mathrm{Re}(x_{H})\,\mathrm{Re}(u_{H}),
\end{equation}
diagonal except for one $2\times2$ block $\left(\begin{smallmatrix}-2&3\\3&-2\end{smallmatrix}\right)$ with eigenvalues $\{1,-5\}$: the signature is $(1,11)$, nondegenerate and Lorentzian, with the timelike direction along the sum of the two $H$-directions---a boost generator from each $\mathfrak{so}(1,3)$ factor, tilted timelike by the cross-coupling.
In the basis of rotations and boosts, an equivalent solution is $h=-\mathbb{I}_{12}$ plus a coupling $h_{K_3^{(1)}K_3^{(2)}}=c$ between the two $K_3$ boosts, which is Lorentzian for $|c|>1$.

\subsection{Semisimple $\oplus$ solvable, $d=8$}
Take $\gR=\mathfrak{so}(1,3)\oplus\mathfrak{aff}(1)$, $\mathfrak{aff}(1)=\{X,Y:[X,Y]=Y\}$.
On $\mathfrak{aff}(1)$, Eq.~\eqref{eq:master} forces $h_{XX}=h_{XY}=0$; set $h_{YY}=-1$.
Take the pair block $h^{(12)}=-\mathbb{I}_3$ as above, and the single cross-coupling $h_{H^{(1)}X}=h_{H^{(2)}X}=\tfrac34$ (its own $\sigma$-partner).
The quadratic form on the fixed locus is
\begin{equation}\label{eq:sm-d8}
  Q=-2|x|^{2}-Y^{2}+3\,\mathrm{Re}(x_{H})\,X,
\end{equation}
whose $(\mathrm{Re}\,x_{H},X)$ block $\left(\begin{smallmatrix}-2&3/2\\3/2&0\end{smallmatrix}\right)$ has negative determinant: signature $(1,7)$, nondegenerate and Lorentzian, with the timelike direction supported on the null radical generator $X$.

\subsection{Full rank, $d=10$: a Lorentzian background of the physical model}
Append to the $d=8$ solution two commuting generators $Z_{1,2}$ with $h_{Z_iZ_j}=-\delta_{ij}$ and no other couplings; abelian directions solve Eq.~\eqref{eq:master} identically, so
\begin{equation}\label{eq:sm-d10}
  \gR=\mathfrak{so}(1,3)\oplus\mathfrak{aff}(1)\oplus\mathbb{R}^{2},\qquad
  Q_{10}=Q-Z_1^2-Z_2^2 ,
\end{equation}
with $Q$ as in Eq.~\eqref{eq:sm-d8}, is a nondegenerate solution of signature $(1,9)$ on an algebra of dimension exactly ten.
Choosing any $C\in GL(10,\mathbb{R})$ with $C^{T}\eta C=h$ (which exists by Sylvester's law of inertia, since $h$ and $\eta$ have the same signature) and any faithful unitary representation $\pi$ of $\gR$, the ten matrices $A^\mu=i\,C^\mu{}_a\,\pi(t^a)$ form a full-rank Lorentzian background of the ten-dimensional model: a spacelike $\mathfrak{so}(1,3)$ sector, time on the null solvable generator $X$, and two commutative directions.
On $\mathfrak{aff}(1)$ one may take $\pi(X)=\partial_s$ and $\pi(Y)=ie^{s}$ on $L^2(\mathbb{R})$, so that the timelike matrix is a momentum-like operator with continuous spectrum; for the two commutative directions, a spread spectrum requires a direct integral over their characters (Sec.~\ref{sec:radical}).

\section{Infrared regulators are fixed to the adjoint Casimir}\label{app:E}

This appendix derives the multiplier condition used in Secs.~\ref{sec:complex} and~\ref{sec:disc}.

The Lorentzian model of Refs.~\cite{Kim:2011ts,Kim:2012mw} carries infrared cutoffs on $\Tr(A^0)^2$ and $\Tr(A^i)^2$, implemented by Lagrange multipliers.
Any such quadratic constraint terms modify the equations of motion to the mass-deformed form
\begin{equation}\label{eq:sm-knt}
  \eta_{\nu\rho}\,[A^\nu,[A^\rho,A^\mu]]=\lambda_{(\mu)}A^\mu
  \quad(\text{no sum over }\mu),
\end{equation}
with one multiplier for each constrained group of directions.
Take any Lie-algebraic ansatz in which the occupied directions span a \emph{simple} algebra and $h=C^{T}\eta\,C$ is proportional to the inverse Killing form---as on the $\mathfrak{su}(2)$ and $\mathfrak{su}(1,1)$ branches of Ref.~\cite{Kim:2011ts}.
Then, exactly as in Appendix~\ref{app:A}, the left-hand side of Eq.~\eqref{eq:sm-knt} is $C^\mu{}_c\,h_{ab}[t^a,[t^b,t^c]]=c_h\,A^\mu$ with $c_h\propto c_\g\neq0$ the adjoint Casimir eigenvalue, acting identically on every occupied direction.
Equation~\eqref{eq:sm-knt} therefore holds if and only if every multiplier equals this one nonzero constant, $\lambda_{(\mu)}=c_h$ for all occupied $\mu$: the regulator terms are mass terms fixed to the adjoint Casimir.
They cannot be switched off---in the limit $\lambda\to0$ the branch disappears, in agreement with Theorem~\ref{thm:nogo}---and, conversely, a genuine mass deformation converts the Casimir obstruction into a solvability condition, which is why semisimple Lie-algebra solutions populate the mass-deformed model~\cite{GoharaSako2025,Berenstein:2002jq}.

The expanding branch of Ref.~\cite{Kim:2011ts} is the solvable algebra $[A^0,A^i]=i\sqrt{\lambda}\,A^i$, $[A^i,A^j]=0$, with $h$ the flat metric.
For it, $-[A^0,[A^0,A^i]]=\lambda A^i$ and $[A^j,[A^j,A^0]]=0$, so Eq.~\eqref{eq:sm-knt} holds with a nonzero spatial multiplier, $\lambda_{(i)}\propto\lambda$, and a vanishing temporal one, $\lambda_{(0)}=0$.
In the undeformed equation, $\lambda=0$, the same algebra requires $h_{00}=h_{0i}=0$ (Sec.~\ref{sec:radical}): the time generator must be null, and with the flat metric the branch again disappears.
The regulator thus plays a different role on the two branches: it cancels the Casimir on the simple ones and it replaces the null coupling on the solvable one.

\section{Lie-algebraic solutions with $d\le11$}\label{app:F}

This appendix summarizes the known Lie-algebraic solutions with $d\le11$ (Table~\ref{tab:landscape}).
The semisimple part follows from Sec.~\ref{sec:real} and is complete.
For the solvable part, Ref.~\cite{Chatzistavrakidis:2011su} scanned all Lie algebras of dimension up to five and all nilpotent Lie algebras of dimension six with the generators placed along coordinate axes, that is, with $h$ equal to the flat metric in the tabulated basis; every solution found there is nilpotent or solvable, and Appendix~\ref{app:C} explains why no algebra with a nonzero semisimple part can appear in that range.
That scan is complete for the flat metric but not for general $h$: the solvable algebras $\mathfrak{aff}(1)\oplus\mathbb{R}$ (Sec.~\ref{sec:radical}) and $\mathfrak{aff}(1)\oplus\mathbb{R}^3$ admit nondegenerate Lorentzian solutions, of signatures $(1,2)$ and $(1,4)$, in which the null generator $X$ is coupled to a commuting direction, whereas with the flat metric $\mathfrak{aff}(1)$ fails.
Abelian and $2$-step nilpotent algebras solve Eq.~\eqref{eq:master} identically in every dimension, and nilpotent algebras also underlie the nilmanifold backgrounds of Ref.~\cite{ChatzistavrakidisJonke2012}.
Solvable algebras of dimension six and higher are not classified; for each fixed, finite-dimensional Lie algebra, Eq.~\eqref{eq:master} is a finite linear system for $h$, so one can extend the list case by case, as we did for $\mathfrak{aff}(1)$, $\mathfrak{e}(2)$, and $\mathfrak{e}(1,1)$ here.
The semidirect rows follow from Sec.~\ref{sec:radical} and Appendix~\ref{app:G}.

\begin{table*}[tb]
\caption{\label{tab:landscape}Lie-algebraic solutions of Eq.~\eqref{eq:master} with $d\le11$, organized by the Levi decomposition.
``Euclidean'' indicates whether a nondegenerate definite solution exists on the algebra; ``Lorentzian'' whether a nondegenerate solution of signature $(1,d-1)$ exists.
The semisimple rows are complete (Sec.~\ref{sec:real}); the survey of Ref.~\cite{Chatzistavrakidis:2011su} covers the solvable sector for $d\le5$, and nilpotent algebras at $d=6$, with $h$ fixed to the flat metric, and the solvable sector for general $h$ is open beyond the rows listed.}
\begin{ruledtabular}
\begin{tabular}{llllcc}
Algebra & $d$ & Solutions of Eq.~\eqref{eq:master} & Signatures & Euclidean & Lorentzian \\
\hline
abelian $\mathbb{R}^d$ & any & $h$ unconstrained & all $(p,q)$ & yes & yes \\
$2$-step nilpotent ($\mathfrak{h}_3$, etc.) & any & $h$ unconstrained & all $(p,q)$ & yes & yes \\
$\mathfrak{aff}(1)$ & $2$ & $h_{XX}=h_{XY}=0$ & degenerate only & no & no \\
$\mathfrak{aff}(1)\oplus\mathbb{R}^{n}$, $n\ge1$ & $2{+}n$ & $h_{XX}=h_{XY}=0$, rest free & $(1,1{+}n)$, etc. & no & yes \\
$\mathfrak{e}(2)$, $\mathfrak{e}(1,1)$ & $3$ & $h_{JJ}=h_{Ja}=0$, $P$ block free & degenerate only\footnotemark[1] & no & no \\
any algebra, $d\le5$; nilpotent, $d=6$ & $\le6$ & flat $h$: Ref.~\cite{Chatzistavrakidis:2011su} & nilpotent or solvable & \multicolumn{2}{c}{general $h$: open} \\
solvable, non-nilpotent & $6$--$11$ & not classified & open & open & open \\
single fixed simple ideal & $3,8,10$ & $h=0$ & --- & no & no \\
$\mathfrak{so}(1,3)$ (one pair) & $6$ & Hermitian $B$ & $(2\tilde p,2\tilde q)$ & yes: $(0,6)$ & no \\
two $3$-dim fixed ideals & $6$ & invertible coupling & $(3,3)$ & no & no \\
three $3$-dim ideals, or $3$-dim${}\oplus{}$pair & $9$ & couplings & $\min(p,q)\ge3$; e.g.\ $(3,6)$ & no & no \\
$\mathfrak{so}(1,3)\oplus\mathfrak{h}_3$ & $9$ & block sums & $(0,9)$, $(1,8)$, etc. & yes & yes \\
$\mathfrak{so}(1,3)\oplus\mathfrak{aff}(1)$ & $8$ & null coupling & $(1,7)$ & no & yes \\
$\mathfrak{so}(1,3)\oplus\mathfrak{e}(2)$ & $9$ & null coupling & $(1,8)$ & no & yes \\
$\mathfrak{so}(1,3)\oplus\mathfrak{aff}(1)\oplus\mathbb{R}^{2}$ & $10$ & Eq.~\eqref{eq:sm-d10} & $(1,9)$ & no & yes \\
$\mathfrak{so}(1,3)\ltimes\mathbb{R}^{4}$ (vector: Poincar\'e) & $10$ & $h$ on translations only & degenerate only & no & no \\
$\mathfrak{so}(1,3)\ltimes\mathbb{R}^{4}$ (spinor) & $10$ & Eq.~\eqref{eq:weylsol} & $(1,9)$, $(0,10)$, etc. & yes & yes \\
$\mathfrak{so}(1,3)\ltimes\mathfrak{h}_5$ (spinor, central $Z$) & $11$ & Eq.~\eqref{eq:heissol} & $(1,10)$, $(0,11)$, etc. & yes & yes \\
\end{tabular}
\end{ruledtabular}
\footnotetext[1]{With $J$ placed along a null direction of the target space, the degenerate solution is the propagating fuzzy cylinder of Ref.~\cite{Steinacker:2011wb}.}
\end{table*}

\section{Semidirect sums: proof of Theorem~\ref{thm:semi}}\label{app:G}

This appendix proves Theorem~\ref{thm:semi}.

\subsection{Representation bound}
The complexification $\mathfrak{so}(1,3)_{\mathbb{C}}\cong\mathfrak{sl}(2,\mathbb{C})\oplus\mathfrak{sl}(2,\mathbb{C})$ is the swapped pair of Sec.~\ref{sec:real}, with irreducible complex representations $(j_1,j_2)$ of dimension $(2j_1+1)(2j_2+1)$, and complex conjugation exchanges the two ideals, so $(j_1,j_2)^{*}\cong(j_2,j_1)$.
A real representation complexifies to one isomorphic to its own conjugate, hence to a sum of $(j,j)$'s and of pairs $(j_1,j_2)\oplus(j_2,j_1)$.
A one-dimensional representation is trivial, because $\mathfrak{so}(1,3)$ equals its own commutator and the target is abelian.
The smallest nontrivial options are $(\tfrac12,\tfrac12)$, the vector representation $V$, and $(\tfrac12,0)\oplus(0,\tfrac12)$, the Weyl spinor $W$ regarded as real; both have real dimension four, and the next, $(1,0)\oplus(0,1)$, has dimension six.
Every nontrivial real representation therefore has dimension at least four.
For $d\le9$, $\dim\rad\le3$ and the brackets $[\gss,\rad]$ vanish, which is statement (i).

\subsection{Abelian radical at $d=10$}
For $x\in\gss$ and $u,v\in\rad$, the Jacobi identity $[x,[u,v]]=[[x,u],v]+[u,[x,v]]$ says that the bracket of $\rad$ is an equivariant map $\Lambda^{2}\rad\to\rad$, from the antisymmetric pairs $u\wedge v$ to the radical.
At $d=10$ the radical is $V$ or $W$, and $\Lambda^{2}V_{\mathbb{C}}=(1,0)\oplus(0,1)$ while $\Lambda^{2}W_{\mathbb{C}}=(0,0)\oplus(\tfrac12,\tfrac12)\oplus(0,0)$: neither shares an irreducible constituent with the representation itself, so by Schur's lemma the bracket vanishes and the radical is abelian.

\subsection{The equations for $c\in\rad$ and statement (ii)}
With abelian $\rad$ and vanishing couplings, the only condition beyond the standalone $\mathfrak{so}(1,3)$ equation is $\sum_{ij}h^{ij}\rho(s_i)\rho(s_j)=0$ on $\rad$, with $h^{ij}$ the $\gss$ block, $\{s_i\}$ a basis of $\gss$, and $\rho$ the representation.
By Corollary~\ref{cor:semisimple}, every solution $h^{ij}$ couples only the two ideals of the pair, so this operator is a sum of mixed products of the two chiral factors.
On $W$ each irreducible half is annihilated by one factor and the operator vanishes for every solution; concretely, Eq.~\eqref{eq:weylact} gives $\rho(K_i)=-i\rho(J_i)$, so $\sum_i\rho(K_i)^{2}=-\sum_i\rho(J_i)^{2}$.
On $V$ both factors act and the operator does not vanish: for $h^{ij}=\delta$ it equals $\mathrm{diag}(3,-1,-1,-1)$ on the basis $(P_0,P_i)$, and in general it is proportional to $\sum_{\alpha\beta}B_{\alpha\beta}\,\sigma_\alpha\otimes\sigma_\beta$ on $V_{\mathbb{C}}\cong\mathbb{C}^2\otimes\mathbb{C}^2$, which vanishes only for $B=0$ because the products $\sigma_\alpha\otimes\sigma_\beta$ are linearly independent.
A direct computation of the solution space confirms this and excludes the entire nine-dimensional family, leaving only solutions supported on the translations, of rank at most four; the Poincar\'e algebra therefore admits no nondegenerate solution.
Solving the conditions on the couplings as well, the full solution space on the spinor algebra has dimension $31$, decomposing as $9+12+10$ into the $\mathfrak{so}(1,3)$ family, the couplings, and the free spinor block, and Eq.~\eqref{eq:weylsol} is a nondegenerate Lorentzian member.

\subsection{Nonabelian radicals at $d=11$ and statement (iii)}
A five-dimensional representation space on which $\mathfrak{so}(1,3)$ acts nontrivially is $M\oplus\mathbb{R}Z$ with $M\in\{V,W\}$ and $Z$ trivial, since no nontrivial real representation of dimension five exists.
The equivariant brackets are: $\Lambda^{2}M\to M$, zero (above); $\Lambda^{2}M\to\mathbb{R}Z$, zero on $V$ and the two-dimensional space of invariant forms $\omega\in\mathrm{span}\{\mathrm{Re}\,\varepsilon,\mathrm{Im}\,\varepsilon\}$ on $W$; and $[Z,u]=\varphi u$ with $\varphi$ an equivariant endomorphism.
Solvability of the radical excludes $\omega\neq0$ together with $\varphi\neq0$ (an equivariant $\varphi\neq0$ is invertible, and $[\rad,\rad]$ would then be all of $\rad$), and all nonzero $\omega$ are equivalent, since multiplication by a unit complex number on the doublet rotates $\varepsilon$ by a phase.
Direct computation of the full solution spaces gives the following.
The central extension $[Q_a,Q_b]=\omega_{ab}Z$ has a $36$-dimensional solution space, decomposing as $9+12+4+11$ into the $\mathfrak{so}(1,3)$ family, the couplings, the allowed $Q$ blocks, and the free $Z$ row, and admits nondegenerate solutions; the scaling on $V$ has a $10$-dimensional space with every solution of rank at most $4$; the scalings on $W$, evaluated at $\varphi=1$, $\varphi=i$, and a generic point of the family, have $31$-dimensional spaces with every solution of rank at most $10$.
For the central extension, the $Z$-component of the equations for $c=x\in\gss$ reads $\mathrm{tr}(\omega\,\rho(x)\,h_{QQ})=0$ for all $x$; these are six conditions, with a four-dimensional space of allowed $Q$ blocks.
The blocks $h_{QQ}=\pm\mathbb{I}_4$ satisfy them, because the form $\varepsilon(v,\rho(x)v)$ is complex-bilinear, so its value on $iv$ is minus its value on $v$ and the trace over the real basis cancels in pairs.
Since $Z$ is central, no entry of $h$ in its row enters any equation, and Eq.~\eqref{eq:heissol} follows, of signature $(1,10)$; the tilted solution with $h_{Q_1Z}\neq0$ of Sec.~\ref{sec:radical} follows in the same way.
This is statement (iii).

\bibliography{refs}

@article{Ishibashi:1996xs,
  author        = "Ishibashi, N. and Kawai, H. and Kitazawa, Y. and Tsuchiya, A.",
  title         = "{A large-$N$ reduced model as superstring}",
  journal       = "Nucl. Phys. B",
  volume        = "498",
  pages         = "467--491",
  year          = "1997",
  doi           = "10.1016/S0550-3213(97)00290-3",
  eprint        = "hep-th/9612115",
  archivePrefix = "arXiv"
}

@article{Aoki:1998vn,
  author        = "Aoki, H. and Iso, S. and Kawai, H. and Kitazawa, Y. and Tada, T.",
  title         = "{Space-time structures from IIB matrix model}",
  journal       = "Prog. Theor. Phys.",
  volume        = "99",
  pages         = "713--746",
  year          = "1998",
  doi           = "10.1143/PTP.99.713",
  eprint        = "hep-th/9802085",
  archivePrefix = "arXiv"
}

@article{Kim:2011cr,
  author        = "Kim, S.-W. and Nishimura, J. and Tsuchiya, A.",
  title         = "{Expanding (3+1)-dimensional universe from a Lorentzian matrix model for superstring theory in (9+1)-dimensions}",
  journal       = "Phys. Rev. Lett.",
  volume        = "108",
  pages         = "011601",
  year          = "2012",
  doi           = "10.1103/PhysRevLett.108.011601",
  eprint        = "1108.1540",
  archivePrefix = "arXiv",
  primaryClass  = "hep-th"
}

@article{Anagnostopoulos:2022dak,
  author        = "Anagnostopoulos, K. N. and Azuma, T. and Hatakeyama, K. and Hirasawa, M. and Ito, Y. and Nishimura, J. and Papadoudis, S. K. and Tsuchiya, A.",
  title         = "{Progress in the numerical studies of the type IIB matrix model}",
  journal       = "Eur. Phys. J. ST",
  volume        = "232",
  pages         = "3681--3695",
  year          = "2023",
  doi           = "10.1140/epjs/s11734-023-00849-x",
  eprint        = "2210.17537",
  archivePrefix = "arXiv",
  primaryClass  = "hep-th"
}

@article{Kim:2011ts,
  author        = "Kim, S.-W. and Nishimura, J. and Tsuchiya, A.",
  title         = "{Expanding universe as a classical solution in the Lorentzian matrix model for nonperturbative superstring theory}",
  journal       = "Phys. Rev. D",
  volume        = "86",
  pages         = "027901",
  year          = "2012",
  doi           = "10.1103/PhysRevD.86.027901",
  eprint        = "1110.4803",
  archivePrefix = "arXiv",
  primaryClass  = "hep-th"
}

@article{Kim:2012mw,
  author        = "Kim, S.-W. and Nishimura, J. and Tsuchiya, A.",
  title         = "{Late time behaviors of the expanding universe in the IIB matrix model}",
  journal       = "JHEP",
  volume        = "10",
  number        = "2012",
  pages         = "147",
  year          = "2012",
  doi           = "10.1007/JHEP10(2012)147",
  eprint        = "1208.0711",
  archivePrefix = "arXiv",
  primaryClass  = "hep-th"
}

@article{Steinacker:2017vqw,
  author        = "Steinacker, H. C.",
  title         = "{Cosmological space-times with resolved Big Bang in Yang-Mills matrix models}",
  journal       = "JHEP",
  volume        = "02",
  number        = "2018",
  pages         = "033",
  year          = "2018",
  doi           = "10.1007/JHEP02(2018)033",
  eprint        = "1709.10480",
  archivePrefix = "arXiv",
  primaryClass  = "hep-th"
}

@article{Chatzistavrakidis:2011su,
  author        = "Chatzistavrakidis, A.",
  title         = "{On Lie-algebraic solutions of the type IIB matrix model}",
  journal       = "Phys. Rev. D",
  volume        = "84",
  pages         = "106010",
  year          = "2011",
  doi           = "10.1103/PhysRevD.84.106010",
  eprint        = "1108.1107",
  archivePrefix = "arXiv",
  primaryClass  = "hep-th"
}

@article{GoharaSako2025,
  author        = "Gohara, J. and Sako, A.",
  title         = "{Quantization of Lie-Poisson algebra and Lie algebra solutions of mass-deformed type IIB matrix model}",
  journal       = "J. Math. Phys.",
  volume        = "67",
  pages         = "022301",
  year          = "2026",
  eprint        = "2503.24060",
  archivePrefix = "arXiv",
  primaryClass  = "hep-th"
}

@article{HanadaKawaiKimura2006,
  author        = "Hanada, M. and Kawai, H. and Kimura, Y.",
  title         = "{Describing curved spaces by matrices}",
  journal       = "Prog. Theor. Phys.",
  volume        = "114",
  pages         = "1295--1316",
  year          = "2006",
  doi           = "10.1143/PTP.114.1295",
  eprint        = "hep-th/0508211",
  archivePrefix = "arXiv"
}

@article{Tsuchiya2024reg,
  author        = "Hattori, K. and Mizuno, Y. and Tsuchiya, A.",
  title         = "{Regularization of matrices in the covariant derivative interpretation of matrix models}",
  journal       = "PTEP",
  volume        = "2024",
  pages         = "123B06",
  year          = "2024",
  doi           = "10.1093/ptep/ptae180",
  eprint        = "2410.13414",
  archivePrefix = "arXiv",
  primaryClass  = "hep-th"
}

@article{HoKawaiSteinacker2026,
  author        = "Ho, P.-M. and Kawai, H. and Steinacker, H. C.",
  title         = "{General relativity in IIB matrix model}",
  journal       = "JHEP",
  volume        = "02",
  number        = "2026",
  pages         = "070",
  year          = "2026",
  eprint        = "2509.06646",
  archivePrefix = "arXiv",
  primaryClass  = "hep-th"
}

@article{SperlingSteinacker2019,
  author        = "Sperling, M. and Steinacker, H. C.",
  title         = "{Covariant cosmological quantum space-time, higher-spin and gravity in the IKKT matrix model}",
  journal       = "JHEP",
  volume        = "07",
  number        = "2019",
  pages         = "010",
  year          = "2019",
  doi           = "10.1007/JHEP07(2019)010",
  eprint        = "1901.03522",
  archivePrefix = "arXiv",
  primaryClass  = "hep-th"
}

@article{Myers:1999ps,
  author        = "Myers, R. C.",
  title         = "{Dielectric branes}",
  journal       = "JHEP",
  volume        = "12",
  number        = "1999",
  pages         = "022",
  year          = "1999",
  doi           = "10.1088/1126-6708/1999/12/022",
  eprint        = "hep-th/9910053",
  archivePrefix = "arXiv"
}

@article{Steinacker:2010rh,
  author        = "Steinacker, H.",
  title         = "{Emergent geometry and gravity from matrix models: an introduction}",
  journal       = "Class. Quant. Grav.",
  volume        = "27",
  pages         = "133001",
  year          = "2010",
  doi           = "10.1088/0264-9381/27/13/133001",
  eprint        = "1003.4134",
  archivePrefix = "arXiv",
  primaryClass  = "hep-th"
}

@article{Madore:1991bw,
  author        = "Madore, J.",
  title         = "{The fuzzy sphere}",
  journal       = "Class. Quant. Grav.",
  volume        = "9",
  pages         = "69--88",
  year          = "1992",
  doi           = "10.1088/0264-9381/9/1/008"
}

@article{Berenstein:2002jq,
  author        = "Berenstein, D. E. and Maldacena, J. M. and Nastase, H. S.",
  title         = "{Strings in flat space and pp waves from $\mathcal{N}=4$ super Yang-Mills}",
  journal       = "JHEP",
  volume        = "04",
  number        = "2002",
  pages         = "013",
  year          = "2002",
  doi           = "10.1088/1126-6708/2002/04/013",
  eprint        = "hep-th/0202021",
  archivePrefix = "arXiv"
}

@article{Hartnoll:2024csr,
  author        = "Hartnoll, S. A. and Liu, J.",
  title         = "{The polarised IKKT matrix model}",
  journal       = "JHEP",
  volume        = "03",
  number        = "2025",
  pages         = "060",
  year          = "2025",
  doi           = "10.1007/JHEP03(2025)060",
  eprint        = "2409.18706",
  archivePrefix = "arXiv",
  primaryClass  = "hep-th"
}

@article{Komatsu:2024bop,
  author        = "Komatsu, S. and Martina, A. and Penedones, J. and Vuignier, A. and Zhao, X.",
  title         = "{Einstein gravity from a matrix integral -- Part I}",
  journal       = "",
  eprint        = "2410.18173",
  archivePrefix = "arXiv",
  primaryClass  = "hep-th",
  year          = "2024"
}

@article{Banks:1996vh,
  author        = "Banks, T. and Fischler, W. and Shenker, S. H. and Susskind, L.",
  title         = "{M theory as a matrix model: A conjecture}",
  journal       = "Phys. Rev. D",
  volume        = "55",
  pages         = "5112--5128",
  year          = "1997",
  doi           = "10.1103/PhysRevD.55.5112",
  eprint        = "hep-th/9610043",
  archivePrefix = "arXiv"
}

@article{Taylor:1996ik,
  author        = "Taylor, W.",
  title         = "{D-brane field theory on compact spaces}",
  journal       = "Phys. Lett. B",
  volume        = "394",
  pages         = "283--287",
  year          = "1997",
  doi           = "10.1016/S0370-2693(97)00033-6",
  eprint        = "hep-th/9611042",
  archivePrefix = "arXiv"
}

@article{Douglas:2001ba,
  author        = "Douglas, M. R. and Nekrasov, N. A.",
  title         = "{Noncommutative field theory}",
  journal       = "Rev. Mod. Phys.",
  volume        = "73",
  pages         = "977--1029",
  year          = "2001",
  doi           = "10.1103/RevModPhys.73.977",
  eprint        = "hep-th/0106048",
  archivePrefix = "arXiv"
}

@article{Snyder:1946qz,
  author        = "Snyder, H. S.",
  title         = "{Quantized space-time}",
  journal       = "Phys. Rev.",
  volume        = "71",
  pages         = "38--41",
  year          = "1947",
  doi           = "10.1103/PhysRev.71.38"
}

@article{Deligne1996,
  author        = "Deligne, P.",
  title         = "{La s\'erie exceptionnelle de groupes de Lie}",
  journal       = "C. R. Acad. Sci. Paris S\'er. I Math.",
  volume        = "322",
  pages         = "321--326",
  year          = "1996"
}

@article{LandsbergManivel2006,
  author        = "Landsberg, J. M. and Manivel, L.",
  title         = "{A universal dimension formula for complex simple Lie algebras}",
  journal       = "Adv. Math.",
  volume        = "201",
  pages         = "379--407",
  year          = "2006",
  doi           = "10.1016/j.aim.2005.02.007",
  eprint        = "math/0401296",
  archivePrefix = "arXiv"
}

@article{Mkrtchyan2012,
  author        = "Mkrtchyan, R. L. and Sergeev, A. N. and Veselov, A. P.",
  title         = "{Casimir eigenvalues for universal Lie algebra}",
  journal       = "J. Math. Phys.",
  volume        = "53",
  pages         = "102106",
  year          = "2012",
  doi           = "10.1063/1.4757763",
  eprint        = "1105.0115",
  archivePrefix = "arXiv",
  primaryClass  = "math.RT"
}

@book{Knapp2002,
  author        = "Knapp, A. W.",
  title         = "{Lie Groups Beyond an Introduction}",
  edition       = "2nd",
  series        = "Progress in Mathematics",
  volume        = "140",
  publisher     = {Birkh{\"a}user},
  address       = "Boston",
  year          = "2002"
}

@article{Liao:2025yfb,
  author        = "Liao, H. and Maeta, R.",
  title         = "{New type of saddle in the Euclidean IKKT matrix model and its emergent geometry}",
  journal       = "Phys. Rev. D",
  volume        = "114",
  pages         = "026021",
  year          = "2026",
  doi           = "10.1103/gfcj-9vb3",
  eprint        = "2512.03161",
  archivePrefix = "arXiv",
  primaryClass  = "hep-th"
}

@article{Nishimura:2022alt,
  author        = "Nishimura, J.",
  title         = "{Signature change of the emergent space-time in the IKKT matrix model}",
  journal       = "PoS",
  volume        = "CORFU2021",
  pages         = "255",
  year          = "2022",
  doi           = "10.22323/1.406.0255",
  eprint        = "2205.04726",
  archivePrefix = "arXiv",
  primaryClass  = "hep-th"
}

@article{ChatzistavrakidisJonke2012,
  author        = {Chatzistavrakidis, Athanasios and Jonke, Larisa},
  title         = {Matrix theory compactifications on twisted tori},
  journal       = {Phys. Rev. D},
  volume        = {85},
  pages         = {106013},
  year          = {2012},
  doi           = {10.1103/PhysRevD.85.106013},
  eprint        = {1202.4310},
  archivePrefix = {arXiv},
  primaryClass  = {hep-th}
}

@article{Mackey1952,
  author  = {Mackey, George W.},
  title   = {Induced representations of locally compact groups. {I}},
  journal = {Ann. Math.},
  volume  = {55},
  pages   = {101--139},
  year    = {1952}
}

@book{Folland1989,
  author    = {Folland, Gerald B.},
  title     = {Harmonic Analysis in Phase Space},
  series    = {Annals of Mathematics Studies},
  volume    = {122},
  publisher = {Princeton University Press},
  address   = {Princeton},
  year      = {1989}
}

@article{Ito2014rg,
  author  = {Ito, Y. and Kim, S.-W. and Koizuka, Y. and Nishimura, J. and Tsuchiya, A.},
  title   = {{A renormalization group method for studying the early universe in the Lorentzian IIB matrix model}},
  journal = {PTEP},
  volume  = {2014},
  pages   = {083B01},
  year    = {2014},
  eprint  = {1312.5415},
  archivePrefix = {arXiv}
}

@article{NishimuraTsuchiya2019cl,
  author  = {Nishimura, J. and Tsuchiya, A.},
  title   = {{Complex Langevin analysis of the space-time structure in the Lorentzian type IIB matrix model}},
  journal = {JHEP},
  volume  = {06},
  pages   = {077},
  year    = {2019},
  eprint  = {1904.05919},
  archivePrefix = {arXiv}
}

@article{Aoki2019str,
  author  = {Aoki, T. and Hirasawa, M. and Ito, Y. and Nishimura, J. and Tsuchiya, A.},
  title   = {{On the structure of the emergent 3d expanding space in the Lorentzian type IIB matrix model}},
  journal = {PTEP},
  volume  = {2019},
  pages   = {093B03},
  year    = {2019},
  eprint  = {1904.05914},
  archivePrefix = {arXiv}
}

@article{Hatakeyama2020cls,
  author  = {Hatakeyama, K. and Matsumoto, A. and Nishimura, J. and Tsuchiya, A. and Yosprakob, A.},
  title   = {{The emergence of expanding space-time and intersecting D-branes from classical solutions in the Lorentzian type IIB matrix model}},
  journal = {PTEP},
  volume  = {2020},
  pages   = {043B10},
  year    = {2020},
  eprint  = {1911.08132},
  archivePrefix = {arXiv}
}

@article{Klinkhamer2020emu,
  author  = {Klinkhamer, F. R.},
  title   = {{On the emergence of an expanding universe from a Lorentzian matrix model}},
  journal = {PTEP},
  volume  = {2020},
  pages   = {103B03},
  year    = {2020},
  eprint  = {1912.12229},
  archivePrefix = {arXiv}
}

@article{Hirasawa2023reg,
  author  = {Hirasawa, M. and Anagnostopoulos, K. N. and Azuma, T. and Hatakeyama, K. and Nishimura, J. and Papadoudis, S. K. and Tsuchiya, A.},
  title   = {{The emergence of expanding space-time in the Lorentzian type IIB matrix model with a novel regularization}},
  journal = {PoS},
  volume  = {CORFU2022},
  pages   = {309},
  year    = {2023},
  eprint  = {2307.01681},
  archivePrefix = {arXiv}
}

@article{Ito2015pos,
  author  = {Ito, Y. and Nishimura, J. and Tsuchiya, A.},
  title   = {{Large-scale computation of the exponentially expanding universe in a simplified Lorentzian type IIB matrix model}},
  journal = {PoS},
  volume  = {LATTICE2015},
  pages   = {243},
  year    = {2016},
  eprint  = {1512.01923},
  archivePrefix = {arXiv}
}

@article{SternXu2018,
  author  = {Stern, A. and Xu, C.},
  title   = {{Signature change in matrix model solutions}},
  journal = "",
  eprint  = {1808.07963},
  archivePrefix = {arXiv},
  year    = {2018}
}

@article{Klinkhamer2020mf,
  author  = {Klinkhamer, F. R.},
  title   = {{IIB matrix model: Emergent spacetime from the master field}},
  journal = "",
  eprint  = {2007.08485},
  archivePrefix = {arXiv},
  year    = {2020}
}

@article{Klinkhamer2020pts,
  author  = {Klinkhamer, F. R.},
  title   = {{IIB matrix model: Extracting spacetime points}},
  journal = "",
  eprint  = {2008.01058},
  archivePrefix = {arXiv},
  year    = {2020}
}

@article{Klinkhamer2020rbb,
  author  = {Klinkhamer, F. R.},
  title   = {{Regularized big bang and IIB matrix model}},
  journal = "",
  eprint  = {2009.06525},
  archivePrefix = {arXiv},
  year    = {2020}
}

@article{Klinkhamer2021mfe,
  author  = {Klinkhamer, F. R.},
  title   = {{A first look at the master-field equation of the IIB matrix model}},
  journal = "",
  eprint  = {2105.05831},
  archivePrefix = {arXiv},
  year    = {2021}
}

@article{BattistaSteinacker2023,
  author  = {Battista, E. and Steinacker, H. C.},
  title   = {{One-loop effective action of the IKKT model for cosmological backgrounds}},
  journal = "",
  eprint  = {2310.11126},
  archivePrefix = {arXiv},
  year    = {2023}
}

@article{SteinackerTran2023lor,
  author  = {Steinacker, H. C. and Tran, T.},
  title   = {{Spinorial description for Lorentzian $\mathfrak{hs}$-IKKT}},
  journal = "",
  eprint  = {2312.16110},
  archivePrefix = {arXiv},
  year    = {2023}
}

@article{Anagnostopoulos2026cl,
  author  = {Anagnostopoulos, K. N. and Azuma, T. and Hirasawa, M. and Nishimura, J. and Papadoudis, S. and Tsuchiya, A.},
  title   = {{The emergence of (3+1)-dimensional expanding spacetime from complex Langevin simulations of the Lorentzian type IIB matrix model with deformations}},
  journal = "",
  eprint  = {2604.19836},
  archivePrefix = {arXiv},
  year    = {2026}
}

@article{Anagnostopoulos2026susy,
  author  = {Anagnostopoulos, K. N. and Azuma, T. and Hirasawa, M. and Nishimura, J. and Tsuchiya, A. and Yamamori, N.},
  title   = {{Impact of supersymmetry on the dynamical emergence of the spacetime in the type IIB matrix model with the Lorentz symmetry ``gauge fixed''}},
  journal = "",
  eprint  = {2604.25564},
  archivePrefix = {arXiv},
  year    = {2026}
}

@article{Steinacker:2011wb,
  author        = {Steinacker, Harold},
  title         = {Split noncommutativity and compactified brane solutions in matrix models},
  journal       = {Prog. Theor. Phys.},
  volume        = {126},
  pages         = {613--636},
  year          = {2011},
  doi           = {10.1143/PTP.126.613},
  eprint        = {1106.6153},
  archivePrefix = {arXiv},
  primaryClass  = {hep-th}
}

@article{Iso:2001mg,
  author        = {Iso, Satoshi and Kimura, Yusuke and Tanaka, Kanji and Wakatsuki, Kazunori},
  title         = {Noncommutative gauge theory on fuzzy sphere from matrix model},
  journal       = {Nucl. Phys. B},
  volume        = {604},
  pages         = {121--147},
  year          = {2001},
  doi           = {10.1016/S0550-3213(01)00173-0},
  eprint        = {hep-th/0101102},
  archivePrefix = {arXiv}
}

@article{Steinacker:2026inv,
  author        = {Steinacker, Harold C.},
  title         = {Quantum spacetime and gravity from the {IKKT} matrix model: an invitation},
  journal       = "",
  year          = {2026},
  eprint        = {2609.08955},
  archivePrefix = {arXiv},
  primaryClass  = {hep-th},
  note          = {arXiv:2609.08955}
}

@unpublished{Vogel1999,
  author = {Vogel, Pierre},
  title  = {The universal {L}ie algebra},
  year   = {1999},
  note   = {Preprint, Universit\'e Paris VII; available at \url{https://webusers.imj-prg.fr/~pierre.vogel/grenoble-99b.pdf}}
}

\end{document}